\documentclass[%
 reprint,
 amsmath,amssymb,
 aps,
pra,
]{revtex4-2}
\usepackage[colorlinks,bookmarksopen,bookmarksnumbered,citecolor=blue,linkcolor=blue,urlcolor=blue]{hyperref}
\usepackage{graphicx}
\usepackage{dcolumn}
\usepackage{bm}

\usepackage{amsthm}

\newtheoremstyle{mythm}
  {6pt}      
  {6pt}      
  {\itshape} 
  {}         
  {\bfseries}
  {.}        
  {0.5em}    
  {}         

\theoremstyle{mythm}
\newtheorem{theorem}{Theorem}
\newtheorem{lemma}{Lemma}
\newtheorem{corollary}{Corollary}

\newcommand{\fangzhen}[1]{{#1}}

\begin{document}

\title{Tabletop reversibility of phase-covariant operations}

\author{Fangzhen Chen}
\author{Xueyuan Hu}%
 \email{xyhu@sdu.edu.cn}
\affiliation{%
 School of Information Science and Engineering, Shandong University, Qingdao 266237, China
}%

\date{\today}

\begin{abstract}
Irreversibility and time-translation symmetry are both fundamental in open quantum dynamics, and it is of great importance to study the interplay between them.
In this paper, we study tabletop reversibility (TTR) for \emph{phase-covariant} quantum operations and obtain sharply different conclusions for finite-dimensional and Gaussian systems.
For finite-dimensional systems,  we prove that any phase-covariant operation that admits a Petz recovery map can be implemented by a dilation which is both time-translational symmetric and tabletop time-reversible.
In contrast, for phase-covariant Gaussian operations we exhibit a concrete obstruction: any Gaussian dilation of a single-mode Gaussian amplification channel is not tabletop time-reversible.
As a byproduct, we find that almost every completely positive and trace preserving (CPTP) map  
admits a dilation that realizes TTR. 
These results clarify whether and how TTR can be realized under phase-covariant symmetry constraints and reveal a qualitative difference between finite-dimensional and Gaussian systems.
\end{abstract}

\keywords{tabletop reversibility, Petz recovery map, phase-covariant operations, Gaussian channel, covariant dilation}
\maketitle


\section{\label{sec:level1}Introduction}
Irreversibility is a fundamental feature of open quantum dynamics. Once a system interacts with uncontrolled environmental degrees of freedom, information about the input is typically dispersed and cannot be perfectly reconstructed from the output alone. 
In order to retrieve quantum information, the Petz recovery map~\cite{petz1986sufficient,10.1093/qmath/39.1.97,petz2003monotonicity} is proposed from the equality condition of the data processing inequality. It generalizes classical Bayesian retrodiction to quantum settings~\cite{bai2025quantum,surace2022state,cenxin2023quantum} and has important applications in quantum error correction, where it provides a near-optimal operation for recovering quantum information transmitted through noisy channels~\cite{barnum2002reversing,ng2010simple}.

In realistic implementations, every completely positive and trace preserving (CPTP) map admits a quantum dilation \fangzhen{obtained by preparing an auxiliary state, applying a global unitary, and tracing out the auxiliary~\cite{Stinespring1955positive,nielsen2010quantum,watrous2018theory}.} In practice, reversibility is not only a question of whether a mathematical recovery map exists but of whether it can reverse within the same system-environment implementation available on the table. Recently, the notion of tabletop reversibility (TTR) was introduced to describe Petz recovery maps that can be realized by simply inverting the global unitary~\cite{aw2024role}.  
Various specific channels, including product-preserving channels, (generalized) thermal operations, factorizable channels, and controlled-$X$ gates, are shown to be tabletop time reversible~\cite{aw2024role,song2025exact}.
Nevertheless, there are certain dilations of a given channel which are not tabletop time reversible~\cite{aw2024role,song2025exact}.
Remarkably, exact necessary and sufficient conditions for TTR were derived at the level of examining a given unitary dilation~\cite{song2025exact}. 
These works typically focus on a given dilation of a channel and then ask whether the particular \fangzhen{dilation} satisfies TTR.
However, there are different implementations of a given channel.
Therefore, it is natural to ask whether a channel admits a tabletop time-reversible dilation, especially under some physical constraints.

Phase covariance, which arises from \fangzhen{Noether's theorem} and the energy-preserving law~\cite{noether1971invariant,byers1998noether}, is a fundamental symmetry constraint in quantum dynamics.  Such a constraint plays a central role  in quantum clock theory,  where time-translation asymmetry underlies the ability of a quantum system to function as a clock~\cite{marvian2019no}. In quantum thermodynamics, time-translation symmetry also governs the quantum coherence under thermal processing and its coherence generated by asymmetry is an important resource in  thermodynamics~\cite{lostaglio2015quantum,lostaglio2015description}. 
These observations are naturally consistent with the resource theory of asymmetry. In this framework, states and operations that respect the underlying symmetry are regarded as free, whereas those that break the symmetry constitute valuable resources~\cite{chitambar2019quantum}. From this perspective, phase-covariant operations form the class of free operations associated with time-translation symmetry. Correspondingly, a covariant dilation provides a physical implementation of such operations \fangzhen{without introducing any additional asymmetry resources~\cite{marvian2016quantify}}. \fangzhen{Accordingly,} it is natural  to explore TTR of phase-covariant operations within the framework of covariant dilation.

Our aim is to answer the following question: For a phase-covariant operation, does there always exist a covariant dilation that realizes TTR, and if so, can one construct it explicitly? This question is operationally significant. In practice, when one wishes to exploit the TTR property of a channel, the natural objective is not to examine possible dilations one by one but to directly engineer a dilation with the desired reversibility property. Moreover, whenever such a dilation can be explicitly constructed, it is "cost-free" both in asymmetry resource theories and at the level of physical implementation. In particular, both the forward channel and its reverse channel are realized by covariant dilations, so no additional asymmetry resource ‘cost’ in either direction~\cite{marvian2016quantify}. From an operational viewpoint, the significance of TTR shows that the Petz recovery map can be implemented by reusing the same physical devices as in the forward process and inverting unitary, rather than costing extra recovery devices~\cite{song2025exact}.

In this work, we answer this question and uncover sharply different conclusions between finite-dimensional systems and Gaussian dynamics. For finite-dimensional systems, we prove that for any phase-covariant operation with at least one output state that is full rank, there is a covariant dilation that can achieve TTR, and we further construct an explicit dilation that realizes this. In contrast, the Gaussian setting exhibits a genuine obstruction. We identify the single-mode Gaussian amplification channel as a concrete counterexample and show that it fails to admit a Gaussian dilation that realizes TTR for any choice of Gaussian prior and Gaussian reverse bath state. As a by-product, we also show that every finite-dimensional CPTP map whose Petz map is well defined admits a dilation that realizes TTR. Taken together, our work sheds light on an explicit construction of tabletop time-reversible dilation for phase-covariant operations and pinpoints where TTR breaks down in Gaussian dynamics.

\section{Preliminaries}

In this section, we briefly review several concepts.

\textbf{CPTP maps and quantum dilations.}
We consider a system $S$ with Hilbert space $\mathcal {H}_S$ and an auxiliary bath (environment) $B$ with Hilbert space
$ \mathcal{H}_B$. The corresponding Hamiltonians are denoted by $ {H}_S$ and $ {H}_B$. Operators on $ \mathcal H$ are denoted by $\mathcal L(\mathcal{H})$, and density operators are denoted by
$\mathcal S(\mathcal{H})$. A linear map $\mathcal E:\mathcal L(\mathcal H)\to\mathcal L(\mathcal H')$ is called a completely positive and trace preserving map if it is completely positive and satisfies $\operatorname{Tr}[\mathcal E(A)] = \operatorname{Tr}[A]
, \forall\,A\in\mathcal L(\mathcal H).$

Any quantum channel admits a quantum dilation. Concretely, a dilation is constructed by preparing an auxiliary system in the state $\beta$, applying a global unitary $U$ on the joint system and then tracing out the auxiliary~\cite{nielsen2010quantum}. We denote the partial trace over the auxiliary by $\mathrm{Tr}_B$. A quantum dilation of a CPTP map $\mathcal E$ can be presented as
\begin{equation}\label{eq:dilation_forward}
\mathcal E(\rho)=\mathrm{Tr}_B\!\left[U\,(\rho\otimes \beta)\,U^\dagger\right],
\end{equation}
where $U$ is a unitary operator on $\mathcal{H}_S \otimes \mathcal{H}_B$. 
If $\beta$ is pure, the above dilation corresponds to  a Stinespring dilation~\cite{watrous2018theory}.

\textbf{phase-covariant operations.}
Define the time-translation operations generated by the free Hamiltonian $H_S$ as
\begin{equation} \label{time-translation}
    U_t(\cdot):=e^{-iH_S t}(\cdot)e^{iH_S t},\qquad \forall\, t\in\mathbb R .
\end{equation}
A state $\rho\in\mathcal S(\mathcal{H}_S)$ is called phase covariant if $U_t(\rho)=\rho$ for all $t$, or equivalently, $[\rho,H_S]=0$.
A channel $\mathcal E_\mathrm{PC}$ is phase covariant if it commutes with the time-translation operations~\cite{marvian2016quantify},
\begin{equation}\label{eq:phase_covariance}
\mathcal E_\mathrm{PC}\circ U_t = U_t\circ \mathcal E_\mathrm{PC}\qquad \forall\,t\in\mathbb R.
\end{equation}
Such phase-covariant operations are often regarded as free under time-translation symmetry: they preserve the set of covariant states and hence cannot generate coherence between distinct energy eigenspaces from covariant inputs~\cite{marvian2016quantify}. It has been proved that any phase-covariant operation in finite-dimensional systems admits a covariant dilation~\cite{marvian2016quantify}. Namely, for any phase-covariant operation $\mathcal E_{\mathrm{PC}}$, there exist a covariant auxiliary state $\beta$ satisfying $[\beta,H_B]=0$ and a covariant unitary $U$  satisfying $[U,H_S+H_B]=0$, such that $\mathcal E_{\mathrm{PC}}$ can be realized by a dilation of the form in Eq.~(\ref{eq:dilation_forward}).

\textbf{Petz recovery map and TTR.}
For a CPTP map $\mathcal E:\mathcal L(\mathcal{H}_S)\to \mathcal L(\mathcal{H}_S)$, the adjoint $\mathcal E^\dagger$ is defined by~\cite{aw2024role}
\begin{equation}
\mathrm{Tr}\!\left[\mathcal E(A_1)\,A_2\right]=\mathrm{Tr}\!\left[A_1\,\mathcal E^\dagger(A_2)\right],
\qquad \forall\,A_1,A_2\in  \mathcal L(\mathcal{H}_S).
\end{equation}
In the following, we denote the prior state of the system by $\alpha$ and the corresponding output state as $\alpha':=\mathcal E(\alpha)$. 
When $\mathcal E(\alpha)$ is full rank,
the Petz recovery map associated with $(\mathcal E,\alpha)$ is defined as~\cite{petz1986sufficient,10.1093/qmath/39.1.97}
\begin{equation}\label{eq:petz}
\widehat{\mathcal E}_{\alpha}(\cdot)
=\alpha^{1/2}\,\mathcal E^\dagger\!\Big(\mathcal E(\alpha)^{-1/2}\,\cdot\,\mathcal E(\alpha)^{-1/2}\Big)\,\alpha^{1/2}.
\end{equation}
If $\mathcal E(\alpha)$ is not full rank, the operator $\mathcal E(\alpha)^{-1/2}$ is not well defined, and the Petz map admits different definitions~\cite{aw2024role}. To avoid this issue, we assume that $\mathcal E(\alpha)$ is full rank in this paper. 

Fix a quantum dilation of the forward channel as in Eq.~(\ref{eq:dilation_forward}).
For a bath state $\beta'\in\mathcal S( \mathcal{H}_B)$, define the corresponding tabletop reverse channel 
\begin{equation}
\mathcal R_{\beta'}(\cdot):=\mathrm{Tr}_B\!\left[U^\dagger\big(\cdot\otimes\beta'\big)U\right].
\end{equation}

In this paper, we use 
$\beta$ to denote the bath input state in the forward channel and $\beta'$ to denote the reverse bath  state used to implement the tabletop reverse channel.
We say that $\mathcal E$ is tabletop time reversible with respect to the prior $\alpha$ if there exists a bath state $\beta'\in\mathcal S( \mathcal{H}_B)$ such that~\cite{aw2024role}
\begin{equation}\label{eq:ttr}
\widehat{\mathcal E}_{\alpha}=\mathcal R_{\beta'}.
\end{equation}
The key point is that Eq.~(\ref{eq:dilation_forward}) can be interpreted as a fixed device architecture. Equivalently, for a fixed prior $\alpha$, TTR means that the Petz recovery map can be realized within the same device architecture by reversing the global unitary and suitably choosing the bath state $\beta'$.
As an example, thermal operations and generalized thermal maps are known to satisfy TTR~\cite{aw2024role}.

\section{Finite-dimensional systems}

This section investigates the TTR of phase-covariant operations in finite-dimensional systems. The analysis focuses on constructing covariant dilations from the Kraus representation of phase-covariant operations. A central issue in this construction is whether the constraints imposed by TTR can be made compatible with the unitarity condition when transforming the Kraus representation of a channel into its dilation form. It will be shown that such a covariant dilation can indeed be constructed. Furthermore, this construction can be extended to arbitrary finite-dimensional CPTP maps with a full-rank output state for the chosen prior.

Before proceeding to our first main result (Theorem~\ref{theorem}), we introduce the following Lemma, which is essential for the construction of the unitary in the dilation.
\begin{lemma}  \label{lemma1}
Consider an $N$-dimensional complex vector space $\mathbb{C}^N$ with an orthonormal basis $\{|1\rangle,\dots,|N\rangle\}$.
Let $\mathcal J:=\{1,\dots,N\}$ and $\mathcal J_f,\mathcal J_g$ be two subsets of $\mathcal{J}$ satisfying $|\mathcal J_f|=|\mathcal J_g|=m\leq \frac{N}{2}$. The complementary sets of $\mathcal J_f,\mathcal J_g$ are denoted as $\mathcal J_f^c:=\mathcal J\setminus \mathcal J_f$ and
$\mathcal J_g^c:=\mathcal J\setminus \mathcal J_g$.
Suppose $\{|f_j\rangle\}_{j\in\mathcal J_f}$ and $\{|g_i\rangle\}_{i\in\mathcal J_g}$ are two sets of orthonormal vectors in $\mathbb{C}^N$, namely,
\begin{equation}\label{eq:orth}
\begin{aligned}
\langle f_j \mid f_{j'} \rangle &= \delta_{jj'}, \qquad &&\forall\, j,j' \in \mathcal J_f,\\
\langle g_i \mid g_{i'} \rangle &= \delta_{ii'}, \qquad &&\forall\, i,i' \in \mathcal J_g.
\end{aligned}
\end{equation}
Moreover, they satisfy
\begin{equation}\label{zero}
\langle i \mid f_j \rangle = 0,
\quad
\langle g_i \mid j \rangle = 0,
\quad
\forall i\in\mathcal J_g,\; j\in\mathcal J_f.
\end{equation}
If an $N\times N$ complex square matrix $M$ can be written as
\begin{equation}\label{eq:partial-M}
M
=
\sum_{j\in\mathcal J_f} |f_j\rangle\langle j|
+
\sum_{i\in\mathcal J_g} |i\rangle\langle g_i|
+
\sum_{\substack{i\in\mathcal J_g^c\\ j\in\mathcal J_f^c}}
M_{ij}\,|i\rangle\langle j|,
\end{equation}
then the entries $\{M_{ij}\}_{i\in\mathcal J_g^c,\;j\in\mathcal J_f^c}$ can be chosen so that $M$ is unitary.
 
\fangzhen{Specifically, one possible choice of the unspecified entries is
\begin{equation}\label{eq:explicit_Mij}
M_{ij}
=
\sum_{k=2m+1}^{N}
\langle i|u_k\rangle
\langle v_k|j\rangle,
\quad
i\in\mathcal J_g^c,\
j\in\mathcal J_f^c,
\end{equation}
where
$\{|u_k\rangle\}_{k=2m+1}^{N}$ and
$\{|v_k\rangle\}_{k=2m+1}^{N}$ are orthonormal states such that 
$
\{|i\rangle\}_{i\in\mathcal J_g}
\cup
\{|f_j\rangle\}_{j\in\mathcal J_f}
\cup
\{|u_k\rangle\}_{k=2m+1}^{N}
$
and
$
\{|j\rangle\}_{j\in\mathcal J_f}
\cup
\{|g_i\rangle\}_{i\in\mathcal J_g}
\cup
\{|v_k\rangle\}_{k=2m+1}^{N}
$
are both orthonormal bases of $\mathbb{C}^N$.
}
\end{lemma}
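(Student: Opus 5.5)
The plan is to write down an explicit unitary directly from the given data and then check that it has the form \eqref{eq:partial-M} with the entries \eqref{eq:explicit_Mij}. In other words, I will not complete $M$ entry by entry. I will show that the operator
\begin{equation*}
W:=\sum_{j\in\mathcal J_f}|f_j\rangle\langle j|+\sum_{i\in\mathcal J_g}|i\rangle\langle g_i|+\sum_{k=2m+1}^{N}|u_k\rangle\langle v_k|
\end{equation*}
is unitary and coincides with $M$ outside the block $\mathcal J_g^c\times\mathcal J_f^c$.

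\emph{Step 1 (orthonormal sets and extension).} First I check that $\{|i\rangle\}_{i\in\mathcal J_g}\cup\{|f_j\rangle\}_{j\in\mathcal J_f}$ is an orthonormal family of $2m$ vectors. Within each group this follows from \eqref{eq:orth} and the standard basis. Across the groups it follows from $\langle i|f_j\rangle=0$ in \eqref{zero}. Likewise $\{|j\rangle\}_{j\in\mathcal J_f}\cup\{|g_i\rangle\}_{i\in\mathcal J_g}$ is orthonormal by \eqref{eq:orth} and $\langle g_i|j\rangle=0$. Since $2m\le N$, Gram--Schmidt extends each family to an orthonormal basis of $\mathbb C^N$. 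The added vectors are $\{|u_k\rangle\}_{k=2m+1}^N$ and $\{|v_k\rangle\}_{k=2m+1}^N$ respectively. This is the only place where $m\le N/2$ is used.

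\emph{Step 2 (unitarity).} The operator $W$ sends the orthonormal basis $\{|j\rangle\}\cup\{|g_i\rangle\}\cup\{|v_k\rangle\}$ bijectively onto the orthonormal basis $\{|f_j\rangle\}\cup\{|i\rangle\}\cup\{|u_k\rangle\}$. Hence $W^\dagger W=WW^\dagger=\mathbb 1$.

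\emph{Step 3 (matching $M$).} I compare $W$ with $M$ on the three regions of the index set.
\begin{itemize}
\item For a column $j\in\mathcal J_f$, the relations $\langle g_i|j\rangle=0$ and $\langle v_k|j\rangle=0$ (the latter by orthogonality within the second basis) give $W|j\rangle=|f_j\rangle$. This is exactly the $j$-th column of \eqref{eq:partial-M}, because the second sum there also annihilates $|j\rangle$ by \eqref{zero}.
\item For a row $i\in\mathcal J_g$, the relations $\langle i|f_j\rangle=0$ and $\langle i|u_k\rangle=0$ give $\langle i|W=\langle g_i|$, matching \eqref{eq:partial-M}.
\item For $i\in\mathcal J_g^c$ and $j\in\mathcal J_f^c$, the first two sums of $W$ contribute $\langle i|f_{j'}\rangle\langle j'|j\rangle=0$ and $\langle i|i'\rangle\langle g_{i'}|j\rangle=0$. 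What remains is $\langle i|W|j\rangle=\sum_k\langle i|u_k\rangle\langle v_k|j\rangle$, which is \eqref{eq:explicit_Mij}.
\end{itemize}
Therefore $M=W$ is unitary.

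The main point needing care is not a hard estimate but consistency. The prescribed parts of $M$ overlap on the block $\mathcal J_g\times\mathcal J_f$, and the free block must not disturb them. Condition \eqref{zero} ensures both prescriptions give $0$ on that overlap. The orthogonality of the completions $|u_k\rangle,|v_k\rangle$ to the fixed vectors ensures that the extra term $\sum_k|u_k\rangle\langle v_k|$ only affects the free block $\mathcal J_g^c\times\mathcal J_f^c$.
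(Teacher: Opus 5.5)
Your proof is correct and uses the same construction as the paper: extend the two orthonormal $2m$-element families to orthonormal bases, and take the unitary that maps one basis onto the other. The paper obtains this same operator as $U_1^\dagger(U_s\oplus\mathbb{I})U_2$ after conjugating $M$ into a canonical block form, whereas you write the closed form down directly and check unitarity and the prescribed entries (including the orthogonality facts such as $\langle v_k|j\rangle=0$ for $j\in\mathcal J_f$), which is slightly shorter and more explicit.
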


\begin{proof}
See proof in Appendix~\ref{app_A}.
\end{proof} 

Lemma~\ref{lemma1} basically means that if $m$ rows and $m$ columns of an $N(\geq2m)$-dimensional square matrix $M$ are specified and satisfy that (C1) the rows and the columns are orthonormal [Eq.~\eqref{eq:orth}] and (C2) the overlap between the specified rows and columns is zero [Eq.~\eqref{zero}], then the unspecified part of $M$ can be chosen such that $M$ is unitary. The proof of Lemma~\ref{lemma1} consists of two steps, which are sketched as follows. We first show that unitary operators $U_1$ and $U_2$ exist such that the specified part of $M$ can be transformed into a unitary $U_s$ acting on a $2m$-dimensional subspace $\mathbb{C}^{2m}$ by the action $U_1(\cdot)U_2^\dagger$. Then we choose the unspecified part of $M'=U_1MU_2^\dagger$ to be block diagonal, with the block acting on $\mathbb{C}^{2m}_c$ being a unitary $U_u$ and the remaining part being zero, where $\mathbb{C}^{2m}_c$ denotes the complement space of $\mathbb{C}^{2m}$. It follows that the unspecified elements of $M$ can be chosen such that $M'=U_s\oplus U_u$, and hence $M=U_1^\dagger(U_s\oplus U_u) U_2$ is unitary.


Now let us consider a system Hamiltonian $H_S$ with the spectral decomposition
\begin{equation}
H_S=\sum_E E\,\Pi_E,
\end{equation} 
where $E$ denotes an energy eigenvalue and $\Pi_E$ is the projector onto the eigensubspace associated with $E$.
For a covariant state $\alpha$ that satisfies $[\alpha,H_S]=0$, there is a common eigenbasis $\{|E,a\rangle\}_{E,a}$ of $H_S$ and $\alpha$, on which both $H_S$ and $\alpha$ are diagonal. The dimension of the system is labeled as $d_S$.

Because $\mathcal E$ is phase covariant, it admits a Kraus decomposition into
Bohr-frequency modes,
\begin{equation} \label{kraus_representation}
\mathcal E(\rho)=\sum_{\omega,s}K_{\omega,s}\rho K_{\omega,s}^\dagger,
\quad
[H_S,K_{\omega,s}]=\omega K_{\omega,s},
\end{equation}
where $\omega$ labels the Bohr-frequency (energy-gap) mode and $s$ labels the multiplicity within each mode.
Equivalently,
\begin{equation}
K_{\omega,s}
=
\sum_{E'-E=\omega}\sum_{a',a}
k^{(\omega,s)}_{E'a',Ea}\,
|E',a'\rangle\langle E,a|.
\end{equation}
Introduce an auxiliary bath Hilbert space $\mathcal H_B$ with orthonormal basis $\{|\mathrm{vac}\rangle,|\omega,s\rangle\}_{\omega,s}$, so that
\begin{equation}
    \mathcal H_B=\mathcal H_{\mathrm{inv}}\oplus \mathcal H^{(\omega,s)},
\end{equation}
where $\mathcal H_{\mathrm{inv}}$ is the one-dimensional subspace spanned by
$|{\mathrm{vac}}\rangle$.
Suppose that the bath Hamiltonian is given by
\begin{equation}
H_B
=
0\,|\mathrm{vac}\rangle\!\langle \mathrm{vac}|
+
\sum_{\omega,s}\omega\,|\omega,s\rangle\!\langle \omega,s|.
\end{equation}

By Theorem~25 of Ref.~\cite{marvian2012symmetry}, any phase-covariant operation $\mathcal E$ admits a covariant Stinespring dilation that can be constructed as follows. Define an isometry $W:\mathcal H_S\otimes \mathcal H_{\mathrm{inv}}\to \mathcal H_S\otimes \mathcal H_B$ as
\begin{equation}\label{eq:mode-decomp}
W=\sum_{\omega,s} K_{\omega,s}\ \otimes\ |{-\omega,s}\rangle\langle{\mathrm{vac}}|.
\end{equation}
Then, the phase-covariant operation $\mathcal E$ can be written as
\begin{equation}\label{eq:Stinespring W}
\mathcal{E}(\rho)=\operatorname{\mathrm{Tr}}_B\!\left[
W\bigl(\rho\otimes |{\mathrm{vac}}\rangle\!\langle{\mathrm{vac}}|\bigr)W^\dagger
\right].
\end{equation}
Notice that each term of $W$ is energy preserving: the operator $K_{\omega,s}$ varies the system energy by $\omega$ while $|{-\omega,s}\rangle\langle{\mathrm{vac}}|$ varies the bath energy by $-\omega$. Therefore, the isometry can be extended to a unitary $U$ acting on $\mathcal H_S\otimes\mathcal H_B$ so that $\mathcal{E}$ can be realized by the following covariant dilation
\begin{equation}\label{eq:Stinespring}
\mathcal{E}(\rho)=\operatorname{\mathrm{Tr}}_B\!\left[
U\bigl(\rho\otimes |{\mathrm{vac}}\rangle\!\langle{\mathrm{vac}}|\bigr)U^\dagger
\right],
\end{equation}
where $[U,\, H_S+H_B]=0$ and $[|{\mathrm{vac}}\rangle\!\langle{\mathrm{vac}}|,\, H_B]=0$.
Our first main result is that for any $\mathcal{E}$, the global unitary $U$ can be constructed such that the covariant dilation in Eq.~\eqref{eq:Stinespring} is tabletop time reversible with respect to $\alpha$ when $\beta'=|\mathrm{vac}\rangle\langle \mathrm{vac}|$, as long as $\mathcal{E}(\alpha)$ is full rank to ensure the existence of the Petz map.


\begin{figure*}
\includegraphics[width=16cm]{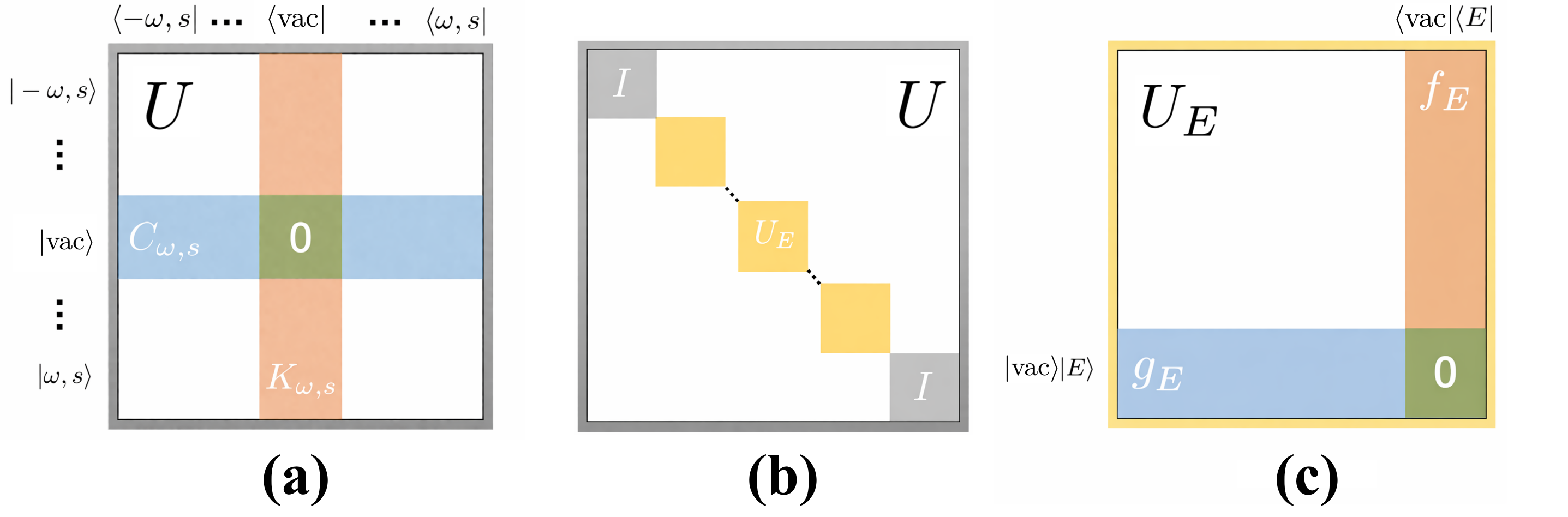}
\caption{\label{TABLETOP/image.png}Schematic illustration of the unitary construction in Theorem~\ref{theorem} and the corresponding proof strategy. For notational convenience, the ordering of labels is reversed in this figure: the first label refers to the environment, while the second refers to the main system. 
(a) Structure of the global unitary \(U\). The \(\langle {\mathrm{vac}}|\)-column is fixed by the forward Kraus operators \(K_{\omega,s}\), the \(|{\mathrm{vac}}\rangle\)-row is fixed by the Kraus operators \(C_{\omega,s}\) of the Petz map, and the overlap block is \(0\). The white blocks remain unspecified at this stage and will be determined by unitary completion.
(b) Block decomposition of the global unitary according to total energy, \(U=\bigoplus_{\bar E} U_{\bar E}\). The yellow blocks represent the constrained energy subblocks \(U_E\) determined by \(K_{\omega,s}\) and \(C_{\omega,s}\). The gray blocks correspond to the parts of $U_{\bar E\neq E}$ and may therefore be chosen as unitary blocks; here, they are taken to be identity operators for convenience.
(c) Structure of a constrained energy subblock \(U_E\). The blue block $g_E$ and the orange block $f_E$ respectively denote the row and column blocks determined by $C_{\omega,s}$ and $K_{\omega,s}$, while their overlap block is fixed to \(0\). By Lemma~\ref{lemma1}, each such subblock $U_E$ can be completed to a unitary. Combined with (b), this implies that the entire global unitary \(U\) can be completed to a unitary.}
\end{figure*}

\begin{theorem} \label{theorem}
    For any phase-covariant operation $\mathcal E$, if a covariant state $\alpha$ exists such that $\alpha'\equiv \mathcal{E}(\alpha)$ is full rank, there is a dilation of $\mathcal E$ that is both covariant and tabletop time reversible with respect to the prior $\alpha$.
\end{theorem}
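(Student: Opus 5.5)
We construct a single covariant unitary $U$ on $\mathcal H_S\otimes\mathcal H_B$, with forward bath $\beta=|\mathrm{vac}\rangle\langle\mathrm{vac}|$ and reverse bath $\beta'=|\mathrm{vac}\rangle\langle\mathrm{vac}|$. Its $\langle\mathrm{vac}|$-column reproduces the Kraus operators $K_{\omega,s}$ of $\mathcal E$, and its $|\mathrm{vac}\rangle$-row reproduces Kraus operators of the Petz map. The first step is to write the Petz map of Eq.~\eqref{eq:petz} in Kraus form:
\begin{equation}
\widehat{\mathcal E}_\alpha(\sigma)=\sum_{\omega,s}C_{\omega,s}\,\sigma\,C_{\omega,s}^\dagger,\qquad C_{\omega,s}:=\alpha^{1/2}K_{\omega,s}^\dagger\alpha'^{-1/2}.
\end{equation}
Because $\alpha$ is covariant and $\mathcal E$ is phase covariant, $\alpha'$ commutes with $H_S$. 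Hence $[H_S,C_{\omega,s}]=-\omega C_{\omega,s}$. The reverse channel is $\mathcal R_{\beta'}(\sigma)=\sum_{b}\langle\mathrm{vac}|U^\dagger|b\rangle\,\sigma\,\langle b|U|\mathrm{vac}\rangle$, summed over the bath basis. We therefore require
\begin{equation}
\langle\mathrm{vac}|U|-\omega,s\rangle=C_{\omega,s}^\dagger=\alpha'^{-1/2}K_{\omega,s}\alpha^{1/2},\qquad \langle\mathrm{vac}|U|\mathrm{vac}\rangle=0,
\end{equation}
where the second condition handles the overlap block. If $\mathcal E$ has a Kraus operator with $\omega=0$ paired with $|\mathrm{vac}\rangle$, that would clash with this condition. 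To avoid it, we relabel: we enlarge $\mathcal H_B$ so that $|\mathrm{vac}\rangle$ is distinct from every $|{-\omega},s\rangle$, including those with $\omega=0$. Then $|\mathrm{vac}\rangle\langle\mathrm{vac}|$ components never appear in $W$, and the overlap block $\langle\mathrm{vac}|U|\mathrm{vac}\rangle=0$ is consistent with the column prescribed by $W$.

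Next we check conditions (C1) and (C2) of Lemma~\ref{lemma1}. The specified column is the isometry $W$, which satisfies $W^\dagger W=\sum K^\dagger K=\mathbb 1$. The specified row is $V:=\sum_{\omega,s}\alpha'^{-1/2}K_{\omega,s}\alpha^{1/2}\otimes|\mathrm{vac}\rangle\langle -\omega,s|$. It is a co-isometry because
\begin{equation}
VV^\dagger=\alpha'^{-1/2}\Big(\sum K\alpha K^\dagger\Big)\alpha'^{-1/2}=\mathbb 1.
\end{equation}
Equivalently, the Petz map is trace preserving, and here full rank of $\alpha'$ is used. The two blocks live on disjoint index sets: columns $\{|E,a\rangle|\mathrm{vac}\rangle\}$ and rows $\{|E,a\rangle|\mathrm{vac}\rangle\}$. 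The prescribed column vectors have zero component along rows $|\cdot\rangle|\mathrm{vac}\rangle$, and the prescribed row vectors have zero component along columns $|\cdot\rangle|\mathrm{vac}\rangle$. These are exactly the zero-overlap conditions in Eq.~\eqref{zero}.

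Covariance is handled by decomposing by total energy $\bar E$. Both $W$ and $V$ conserve $H_S+H_B$, so restricted to the eigenspace of $H_S+H_B$ with value $\bar E$ we get a subblock $U_{\bar E}$. In this subblock, the prescribed columns are $|E{=}\bar E,a\rangle|\mathrm{vac}\rangle$, and their images are orthonormal vectors within the same energy space. The prescribed rows are indexed by the same set of $m_{\bar E}=\dim\Pi_{\bar E}$ vectors. We apply Lemma~\ref{lemma1} in each block with $N=\dim$ of the total-energy eigenspace, and then set $U=\bigoplus_{\bar E}U_{\bar E}$, which commutes with $H_S+H_B$. The forward relation in Eq.~\eqref{eq:Stinespring} holds by construction of the column. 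The reverse channel then equals $\widehat{\mathcal E}_\alpha$ by construction of the row, giving Eq.~\eqref{eq:ttr}.

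The main obstacle is the dimension condition $m\le N/2$ in each energy block. This requires that each block contain at least as many vectors outside $\mathcal H_S\otimes|\mathrm{vac}\rangle$ as inside. It can fail when the bath modes are few, for example at a boundary energy where $\mathcal H_S\otimes\mathcal H^{(\omega,s)}$ contributes too few states. Our first remedy is to pad the Kraus set, for instance by splitting each $K_{\omega,s}$ into copies $K_{\omega,s}/\sqrt2$ with multiplicity labels $s$. This doubles every bath mode without changing $\mathcal E$ or $\widehat{\mathcal E}_\alpha$. Alternatively, we add bath levels $|-\omega,s\rangle$ with zero Kraus operators. With this padding, each block satisfies $N_{\bar E}\ge 2m_{\bar E}$: for every $E$ and every $\omega$, the space contains $|E-\omega\rangle|\omega\text{-mode}\rangle$ with enough multiplicity, in particular including a $\omega=0$ copy of dimension at least $m_{\bar E}$.
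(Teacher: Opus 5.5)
Your plan is the paper's construction: the covariant Stinespring isometry $W$ fixes the $|\mathrm{vac}\rangle$-column, the Petz Kraus operators fix the $\langle\mathrm{vac}|$-row, the vac--vac block is set to zero, and Lemma~\ref{lemma1} is applied in each total-energy block. As written, however, the covariance step fails because of a sign error in the row.

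Your $C_{\omega,s}^\dagger=\alpha'^{-1/2}K_{\omega,s}\alpha^{1/2}$ raises the system energy by $\omega$. The bath transition $|{-\omega},s\rangle\to|\mathrm{vac}\rangle$ also raises the bath energy by $\omega$. Hence $[H_S+H_B,V]=\sum_{\omega,s}2\omega\,\alpha'^{-1/2}K_{\omega,s}\alpha^{1/2}\otimes|\mathrm{vac}\rangle\langle{-\omega},s|\neq 0$. Each such entry links a column of total energy $E-\omega$ to a row of total energy $E+\omega$. So for $\omega\neq0$ it cannot sit inside any $U_{\bar E}$ (whenever it is nonzero), and your claim that ``both $W$ and $V$ conserve $H_S+H_B$'' is false.

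The fix is to attach the Petz operator to the opposite bath level, $\langle\mathrm{vac}|U|\omega,s\rangle=\alpha'^{-1/2}K_{\omega,s}\alpha^{1/2}$, as in Eq.~\eqref{unitary construction row}. This needs the bath to contain both $|\pm\omega,s\rangle$, which your freedom to enlarge $\mathcal H_B$ allows. Your own observation $[H_S,C_{\omega,s}]=-\omega C_{\omega,s}$ should have flagged this.

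Relatedly, your displayed formula for $\mathcal R_{\beta'}$ has bras and kets swapped; as written it is $\mathcal E^\dagger$. The reverse Kraus operators are $\langle b|U^\dagger|\mathrm{vac}\rangle$, which is what your row condition actually uses.

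The ``main obstacle'' you identify is not real, so the padding step is unnecessary, though harmless. In a block $\bar E$, the $m$ prescribed columns are orthonormal (from $\sum K_{\omega,s}^\dagger K_{\omega,s}=I$). They are also orthogonal to the $m$ basis vectors $|E,a\rangle|\mathrm{vac}\rangle$ with $E=\bar E$. These $2m$ orthonormal vectors already lie in the $N$-dimensional block, so $N\ge 2m$ follows from (C1)--(C2) alone. This is why the paper applies Lemma~\ref{lemma1} without a separate dimension count.

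Also, the row and column index sets are the same vac set, not disjoint. What Lemma~\ref{lemma1} needs is the zero-overlap condition, which you state correctly.
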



\fangzhen{Before providing the complete proof of Theorem~\ref{theorem}, we briefly outline the main idea of the proof. As illustrated in Fig.~\ref{TABLETOP/image.png}, our purpose is to show that the Kraus operators of the forward
channel and those of its Petz recovery map can be embedded into a single energy-preserving unitary $U$. As shown in Eq.~\eqref{eq:mode-decomp}, the Kraus operators
\(K_{\omega,s}\) of the forward channel determine the isometry $W$ and thus $d_S$ columns of $U$. Similarly, the Kraus operators \(C^\dagger_{\omega,s}\) of the Petz map determine $d_S$ columns of $U^\dagger$. In other words, $d_S$ columns and $d_S$ rows of $U$ are determined by the forward channel and its Petz map respectively. Importantly, in our setting, the input and output states of the auxiliary are orthogonal to each other. It ensures that the overlap between the prescribed rows and columns is zero. Therefore, we can apply Lemma~\ref{lemma1} to show that such unitary $U$ does exist. Moreover, the covariance of the forward and backward channels also ensures the block diagonal form of $U$, and thus $U$ is also energy preserving.
}

\begin{proof}

The Petz recovery map for the prior $\alpha$ is calculated as
\begin{equation}\label{reverse_map_kraus}
\begin{aligned}
\widehat{\mathcal E}_{\alpha}(\cdot)
&=
\alpha^{1/2}\mathcal E^\dagger\!\left[\alpha'^{-1/2}(\cdot)\alpha'^{-1/2}\right]\alpha^{1/2} \\
&=
\sum_{\omega,s}
\alpha^{1/2}K_{\omega,s}^\dagger\alpha'^{-1/2}
\,(\cdot)\,
\alpha'^{-1/2}K_{\omega,s}\alpha^{1/2} \\
&=
\sum_{\omega,s} C_{\omega,s}^\dagger (\cdot)\, C_{\omega,s}.
\end{aligned}
\end{equation}
Furthermore,
\begin{equation}\label{eq:C-unital}
\begin{aligned}
\sum_{\omega,s} C_{\omega,s} C_{\omega,s}^\dagger
&=
\alpha'^{-1/2}
\Bigl(\sum_{\omega,s} K_{\omega,s}\alpha K_{\omega,s}^\dagger\Bigr)
\alpha'^{-1/2} \\
&=
\alpha'^{-1/2} \alpha' \alpha'^{-1/2} =
I.
\end{aligned}
\end{equation}
The above equation always holds since the Petz map is a CPTP map  \fangzhen{while $\alpha'$ is full-rank}.
Thus, $\{C_{\omega,s}^\dagger\}_{\omega,s}$ forms a valid Kraus representation of the Petz map.

Now consider a tabletop reverse channel given by
\begin{equation} \label{reverse channle}
\mathcal R(\cdot)=\operatorname{Tr}_B\!\left[U^\dagger \bigl(\cdot\otimes |{\mathrm{vac}}\rangle\langle {\mathrm{vac}}|\bigr) U\right],
\end{equation}
where $U$ is the unitary in the dilation of the forward channel $\mathcal{E}$ as in Eq.~\eqref{eq:Stinespring}.
Hence a column of blocks in $U$ is fixed as
\begin{eqnarray}
\label{unitary construction column} \langle -\omega,s|\,U\,|{\mathrm{vac}}\rangle =& K_{\omega,s},\\
\label{unitary construction block} \langle {\mathrm{vac}}|\,U\,|{\mathrm{vac}}\rangle =& 0.
\end{eqnarray}
In the following, we show that $U$ can be constructed such that (a) the tabletop reverse channel $\mathcal R$ in Eq.~\eqref{reverse channle} coincides with the Petz map $\widehat{\mathcal E}_{\alpha}$ defined in Eq.~\eqref{reverse_map_kraus}, and (b) $[U,H_S+H_B]=0$.

Condition (a) is equivalent to
\begin{equation}\label{unitary construction row}
\langle {\mathrm{vac}}|\,U\,|\omega,s\rangle = C_{\omega,s}.
\end{equation}
It means that a row of blocks in $U$ is also specified. Moreover, the overlap block between the specified row and column equals zero.

For condition (b), we write the total Hamiltonian as $H_S+H_B=\sum_{\bar{E}}\bar{E}\Pi_{\bar{E}}$, where $\Pi_{\bar{E}}$ is the projection onto the eigensubspace $\mathcal{H}_{\bar{E}}$ belonging to $\bar{E}$. With this, condition (b) is equivalent to
\begin{equation}\label{eq: U block}
U = \bigoplus_{\bar{E}} U_{\bar{E}},
\end{equation}
where $U_{\bar{E}}$ is a unitary that acts on $\mathcal{H}_{\bar{E}}$.

Next, we show that $U$ exists such that \fangzhen{Eqs.~\eqref{unitary construction column}--~\eqref{eq: U block}} can be satisfied.

Firstly, all of the matrix elements specified in \fangzhen{Eqs.~\eqref{unitary construction column}--~\eqref{unitary construction row}} lie in the blocks $U_{\bar{E}\in \{E\}}$, where $\{E\}$ denotes the set of eigenvalues of $H_S$. Therefore, each of the blocks $U_{\bar{E}\notin \{E\}}$ can be chosen as an arbitrary unitary matrix. Here we \fangzhen{choose} $U_{\bar{E}\notin \{E\}}=\mathbb{I}$ for simplicity.

Secondly, all nonzero elements of $U$ specified in Eqs.~\eqref{unitary construction column} and~\eqref{unitary construction row} lie in the blocks $U_{\bar{E}=E}$, where $E$ denotes the eigenvalues of $H_S$. The reason is as follows. The elements specified by Eq.~\eqref{unitary construction column} read
\begin{equation}
    \langle-\omega,s|\langle E',a|U|E,a\rangle|\mathrm{vac}\rangle=\langle E',a|K_{\omega,s}|E,a\rangle,
\end{equation}
which is nonzero only if $E'-E=\omega$. 
It follows that for any nonzero element of $U$, both the column index $|E,a\rangle|\mathrm{vac}\rangle$ and the row index $|E',a\rangle|-\omega,s\rangle$ belong to $\mathcal{H}_E$, and therefore these elements are located within the blocks $U_E$. 
A similar argument applies to the elements specified by Eq.~\eqref{unitary construction row}.
Therefore, we are free to set all the elements that do not lie in the blocks $U_{\bar E}$ to be zero, such that $U$ is block diagonal.

Thirdly, the elements specified by Eqs.~\eqref{unitary construction column} and~\eqref{unitary construction row} in each block $U_E$ satisfy the conditions of Lemma~\ref{lemma1}. More precisely, in each block $U_E$, the columns specified by Eq.~\eqref{unitary construction column} are
\begin{equation}
\begin{aligned}
|f_{E,a}\rangle
&:= U|E,a\rangle|\mathrm{vac}\rangle = W|E,a\rangle|\mathrm{vac}\rangle \\
&= \sum_{\omega,s} K_{\omega,s}|E,a\rangle|-\omega,s\rangle .
\end{aligned}
\end{equation}
Similarly, the rows specified by Eq.~\eqref{unitary construction row} are
\begin{equation}
\langle g_{E,a}| = \langle \mathrm{vac}|\langle E,a|\,U = \sum_{\omega,s} \langle E,a|\, C_{\omega,s}\otimes \langle \omega,s|.
\end{equation}
Clearly, $(\langle \mathrm{vac}|\langle E,a'|)\,|f_{E,a}\rangle = 0,\
\langle g_{E,a}|\, (|E,a'\rangle|\mathrm{vac}\rangle)=0,$ for all $a,a'$. Moreover, both the specified columns and the specified rows are orthonormal, since
\begin{equation} \label{orth_f}
\langle f_{E,a}|f_{E',b}\rangle
=
\langle E,a|
\Bigl(\sum_{\omega,s}K_{\omega,s}^\dagger K_{\omega,s}\Bigr)
|E',b\rangle
=
\delta_{EE'}\delta_{ab}
\end{equation}
and
\begin{equation}\label{orth_g}
\langle g_{E,a}|g_{E',b}\rangle
=
\langle E,a|
\Bigl(\sum_{\omega,s}C_{\omega,s}C_{\omega,s}^\dagger\Bigr)
|E',b\rangle
=
\delta_{EE'}\delta_{ab}.
\end{equation}
By Lemma~\ref{lemma1}, each block $U_E$ can be extended to a unitary from the specified rows and columns.

In summary, for every phase-covariant operation $\mathcal{E}$, if its Petz map is well defined with respect to a covariant prior state $\alpha$, we can construct a dilation that is both phase covariant and tabletop time reversible with respect to $\alpha$. This completes the proof of Theorem~\ref{theorem}.
\end{proof}

Our construction of the dilation relies on two key ingredients.
First, when the prior is chosen to be covariant, the Petz recovery map is a phase-covariant operation, and its Kraus operators can fit into a covariant unitary.
Second, in both the forward and tabletop reverse channels, the auxiliary input and output states belong to mutually orthogonal subspaces. Precisely, we choose the input state of the auxiliary as $|\mathrm{vac}\rangle\langle \mathrm{vac}|$ and let the corresponding block of $U$ vanish, i.e., $\langle \mathrm{vac}|U|\mathrm{vac}\rangle=0$.
With these two key points, we show that the Kraus operators of the forward channel can fit into a column of blocks of a global covariant unitary, while the Hermitian conjugates of Kraus operators of the Petz recovery map can fit into a row of blocks of the same unitary. 
It follows that the constructed tabletop reverse channel equals the Petz map.


Moreover, the dilation provided by Theorem~\ref{theorem} is cost free at both the resource-theoretic and device levels. Indeed, both the forward channel and the corresponding reverse channel are realized as phase-covariant dilations, so no additional asymmetry resource is invoked. For physical implementation, TTR implies that the Petz recovery map can be \fangzhen{realized} by reversing the setup of the forward process instead of requiring additional devices.

Theorem~\ref{theorem} identifies a different class of dilations that satisfy TTR.
It has been shown that~\cite{aw2024role}, if a tuple $(U,\alpha,\beta)$ is product preserving, i.e., $U(\alpha\otimes\beta)U^\dagger=\alpha'\otimes\beta'$, then the dilation $\mathcal E(\cdot)=\mathrm{Tr}_B[U(\cdot\otimes\beta)U^\dagger]$ is tabletop time reversible with respect to $\alpha$. That is, $\widehat{\mathcal E}_\alpha(\cdot)=\mathcal{R}_{\beta'}(\cdot)=\mathrm{Tr}_B[U^\dagger(\cdot\otimes\beta')U]$.
However, in the dilation constructed in Theorem~\ref{theorem}, the tuple $(U,\alpha,|\mathrm{vac}\rangle\langle \mathrm{vac}|)$ is not product preserving. 
Indeed,
\begin{equation}
\begin{aligned}
U(\alpha\otimes |\mathrm{vac}\rangle\langle \mathrm{vac}|)U^\dagger
=\\
\sum_{\omega',s'}\sum_{\omega,s}
K_{\omega',s'}\alpha K_{\omega,s}&^\dagger 
 \otimes
|-\omega',s'\rangle\langle -\omega,s|,
\end{aligned}
\end{equation}
which is generally not a product state and contains nontrivial system-bath correlations. 

Our results also have applications in the resource theory of quantum thermodynamics. In this resource theory, the free state is the Gibbs state $\gamma_S^\kappa := \frac{e^{-\kappa H_S}}{\mathrm{Tr}(e^{-\kappa H_S})}$, where $\kappa := \frac{1}{k_B T}>0$ is the inverse temperature, $k_B$ is the Boltzmann constant and $T$ is the temperature. There are two widely accepted sets of free operations. The first set is called  thermal operations (TOs)~\cite{horodecki2013fundamental}. A channel $\mathcal E$ is a  TO)  at inverse temperature $\kappa$ if it can be implemented as $\mathcal E(\rho)=\mathrm{Tr}_B\!\left[\,U\big(\rho\otimes\gamma_B^\kappa\big)U^\dagger\right]$, where $[U,\,H_S+H_B]=0$ and $\gamma_B^\kappa := \frac{e^{-\kappa H_B}}{\mathrm{Tr}(e^{-\kappa H_B})}$ is the Gibbs state of the auxiliary (heat bath).
The second set is called enhanced thermal operations (EnTOs)~\cite{cwiklinski2014towards,lostaglio2015quantum}. This set is defined axiomatically as channels that satisfy (a) covariance with respect to $H_S$ and (b) Gibbs state preservation.
It is known that~\cite{aw2024role} all TOs are tabletop time-reversible with respect to the Gibbs state.
Yet, the set of TOs is a strict subset of EnTOs~\cite{ding2021exploring}.
It then leaves open the question of TTR for EnTOs.
Here, we answer this question in the affirmative.

\begin{corollary} \label{corollary_1}
    Every EnTO admits a covariant dilation that is tabletop time reversible with respect to the Gibbs state $\gamma_S^\kappa$.
\end{corollary}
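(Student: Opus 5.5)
The plan is to obtain Corollary~\ref{corollary_1} as a direct specialization of Theorem~\ref{theorem}, taking the prior to be $\alpha=\gamma_S^\kappa$. We need to check the two hypotheses of Theorem~\ref{theorem} for an arbitrary EnTO $\mathcal E$. First, $\mathcal E$ must be a phase-covariant operation. Second, there must be a covariant prior whose output is full rank.

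The first hypothesis holds by definition, since axiom (a) of EnTOs is exactly covariance with respect to $H_S$, i.e., Eq.~\eqref{eq:phase_covariance}.

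For the second hypothesis, note that the Gibbs state is a function of $H_S$, so $[\gamma_S^\kappa,H_S]=0$ and $\gamma_S^\kappa$ is a covariant state. By axiom (b), $\alpha'=\mathcal E(\gamma_S^\kappa)=\gamma_S^\kappa$. This state is full rank because every eigenvalue $e^{-\kappa E}/\mathrm{Tr}(e^{-\kappa H_S})$ is strictly positive for finite $\kappa>0$. Hence $\alpha'^{-1/2}=(\gamma_S^\kappa)^{-1/2}$ is well defined, and the Petz map $\widehat{\mathcal E}_{\gamma_S^\kappa}$ exists.

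Theorem~\ref{theorem} then supplies a covariant dilation $(U,|\mathrm{vac}\rangle\langle\mathrm{vac}|)$ with $[U,H_S+H_B]=0$ and $[|\mathrm{vac}\rangle\langle\mathrm{vac}|,H_B]=0$. It also gives $\widehat{\mathcal E}_{\gamma_S^\kappa}=\mathcal R_{\beta'}$ with $\beta'=|\mathrm{vac}\rangle\langle\mathrm{vac}|$. This is exactly the claim.

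No step is a real obstacle; the only point needing care is interpretation. The dilation produced here uses the pure covariant bath state $|\mathrm{vac}\rangle\langle\mathrm{vac}|$ rather than a thermal bath, so it is not claimed to be a thermal-operation implementation. This is consistent with EnTOs being strictly larger than TOs. What the corollary asserts is only a covariant dilation together with TTR with respect to $\gamma_S^\kappa$.

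As a side remark, in this setting the Petz map reduces to $\widehat{\mathcal E}_{\gamma}(\cdot)=\gamma^{1/2}\mathcal E^\dagger(\gamma^{-1/2}\cdot\gamma^{-1/2})\gamma^{1/2}$. This is itself an EnTO, namely the time-reversal or dual channel, which fits naturally with the forward and reverse channels sharing the same covariant unitary. The argument does not need this fact.
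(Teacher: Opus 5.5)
Your proposal is correct and follows the paper's proof: both take the Gibbs state $\gamma_S^\kappa$ as the prior, note that Gibbs preservation makes $\alpha'=\gamma_S^\kappa$ full rank and that EnTOs are covariant by definition, and then invoke Theorem~\ref{theorem}. You also explicitly check that $\gamma_S^\kappa$ is itself covariant, a hypothesis of Theorem~\ref{theorem} that the paper uses only implicitly.
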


\begin{proof}
    EnTOs preserve the Gibbs state $\gamma_S^\kappa$. Choosing the prior to be $\alpha=\gamma_S^\kappa$, we have the corresponding output state $\alpha'=\mathcal E(\alpha)=\gamma_S^\kappa$. It follows that $\alpha'$ is full rank. In addition, EnTOs are phase covariant. According to Theorem~\ref{theorem}, for any EnTO, there is a dilation which is both phase covariant and tabletop time reversible. \fangzhen{This completes the proof of Corollary~\ref{corollary_1}.}
\end{proof}

It is worth noting that the implementation of EnTOs is not thermodynamically cost free: there are EnTOs that cannot be \fangzhen{realized} by preparing an auxiliary Gibbs state and applying a phase-covariant unitary. However, our result shows that with some athermality resource in the auxiliary state, one can \fangzhen{realize} an EnTO by using a phase-covariant unitary. Moreover, this dilation is tabletop time reversible, and thus the reverse channel does not cost extra devices.

\fangzhen{
The athermality resource mentioned above includes two contributions: the coherence between energy levels and the non-equilibrium in population distribution.
Apparently, the coherence resource vanishes in the above EnTO implementation. In the dilation considered in Corollary~\ref{corollary_1}, both the auxiliary state and the global unitary are covariant. Thus the implementation does not consume any coherence resource.
However, the initial state $|\mathrm{vac}\rangle$ of the auxiliary is not a Gibbs state, so some thermodynamic resource, such as energy, has to be consumed to prepare the initial state. According to Refs.~\cite{brandao2013resource,wilming2017axiomatic}, the minimum consumed energy in preparing a state $\beta_B$ which satisfies $[\beta_B,H_B]=0$ is quantified by the nonequilibrium free energy cost
\begin{equation} \label{deltaF}
    \Delta F_\kappa(\beta_B)
    :=
    F_\kappa(\beta_B)-F_\kappa(\gamma_B^\kappa)
    =
    \kappa^{-1} D(\beta_B\Vert \gamma_B^\kappa),
\end{equation}
where \(\kappa\) denotes the inverse temperature, 
$\gamma_B^\kappa=e^{-\kappa H_B}/Z_B$ is the Gibbs state,
\(F_\kappa(\rho):=\operatorname{Tr}[H\rho]-\kappa^{-1}S(\rho)\) is the free energy associated with the Hamiltonian \(H\), 
\(S(\rho):=-\operatorname{Tr}[\rho\ln\rho]\) is the von Neumann entropy, and 
\(D(\rho\Vert\sigma)\) denotes the quantum relative entropy.
$
D(\rho\Vert\sigma)
:=
{\rm Tr}\!\left[\rho(\log\rho-\log\sigma)\right].
$
In our case, $\beta_B=|\mathrm{vac}\rangle\langle \mathrm{vac}|$, and direct calculation leads to
\begin{align} \label{Cath}
  \Delta F_\kappa(|\mathrm{vac}\rangle\langle \mathrm{vac}|)=
    \kappa^{-1}\log Z_B.
\end{align}}

Next, we present a concrete example to show how to construct a covariant dilation that is tabletop time reversible for an EnTO that does not belong to TOs.


\textbf{A qutrit example.}
Consider a qutrit with Hamiltonian $H_S=\sum_{E=0}^2 E\Delta\,|E\rangle_S\langle E|$. 
The Gibbs state is $\gamma_S^\kappa=\frac{1}{1+q+q^2}
\bigl(
|0\rangle_S\langle 0|
+q\,|1\rangle_S\langle 1|
+q^2\,|2\rangle_S\langle 2|
\bigr)$, where $q=e^{-\kappa \Delta}$.
As shown in Ref.~\cite{ding2021exploring}, the channel characterized by the Kraus operators $K_{+2}=q\,|2\rangle_S\langle 0|$,
$K_{0}=\sqrt{1-q^2}\,|0\rangle_S\langle 0|+|1\rangle_S\langle 1|$,
$K_{-2}=|0\rangle_S\langle 2|$ and \(K_{\pm1}=0\) is an EnTO, but it can induce state transformations that cannot be realized via TOs even approximately.
In the following, we construct a covariant dilation of this channel, which is tabletop time-reversible with respect to the Gibbs state $\gamma_S^\kappa$.

The auxiliary is a six-dimensional system with Hamiltonian $H_B=\sum_{\omega=0,\pm1,\pm2}\omega\Delta\,|\omega\rangle_B\langle\omega|
+0\,|\mathrm{vac}\rangle_B\langle \mathrm{vac}|$ and is initially in state $|\mathrm{vac}\rangle\langle \mathrm{vac}|$. The global covariant unitary is block diagonal as in Eq.~\eqref{eq: U block}. If $\bar{E}\neq0,\Delta,2\Delta$, $U_{\bar{E}}$ can be chosen as arbitrary unitary matrices. The three blocks with $\bar{E}=0,\Delta,2\Delta$ are in the following form
\begin{align}\label{qutrit_example_result}
U_{E_1=0}
&=
\begin{pmatrix}
1-q^2 & 0 & -q\sqrt{1-q^2} & q\\
0 & 1 & 0 & 0\\
-q\sqrt{1-q^2} & 0 & q^2 & \sqrt{1-q^2}\\
q & 0 & \sqrt{1-q^2} & 0
\end{pmatrix}, \\
U_{E_2=\Delta}
&=
\begin{pmatrix}
1 & 0 & 0 & 0\\
0 & 0 & 1 & 0\\
0 & 1 & 0 & 0\\
0 & 0 & 0 & 1
\end{pmatrix},
U_{E_3=2\Delta}=
\begin{pmatrix}
1 & 0 & 0 & 0\\
0 & 0 & 0 & 1\\
0 & 0 & 1 & 0\\
0 & 1 & 0 & 0 
\end{pmatrix}.\notag
\label{eq:U_blocks}
\end{align}
Here, the corresponding eigenspaces are $\mathcal H_{E_1=0}=\operatorname{span}\{|2\rangle_S|-2\rangle_B,\ |1\rangle_S|-1\rangle_B,\ |0\rangle_S|0\rangle_B,\ |0\rangle_S|\mathrm{vac}\rangle_B\}$, $\mathcal H_{E_2=\Delta}=\operatorname{span}\{|2\rangle_S|-1\rangle_B,\ |1\rangle_S|0\rangle_B,\ |1\rangle_S|\mathrm{vac}\rangle_B,\ |0\rangle_S|+1\rangle_B\}$, and
$\mathcal H_{E_3=2\Delta}=\operatorname{span}\{|2\rangle_S|0\rangle_B,\ |2\rangle_S|\mathrm{vac}\rangle_B,\ |1\rangle_S|+1\rangle_B,\ |0\rangle_S|+2\rangle_B\}$. With this, one can show that the tabletop reverse channel is characterized by its Kraus operators $C_\omega^\dagger = K_{-\omega}$, which coincides with the Petz recovery map $\widehat{\mathcal E}_{\gamma_S^\kappa}$.
\fangzhen{Details of constructing the unitary blocks in Eq.~\eqref{qutrit_example_result} can be found in Appendix~\ref{Worked Example}}.

\fangzhen{
Next, we briefly evaluate the athermal resource consumed in the dilation. With $H_B=\sum_{\omega=0,\pm1,\pm2}\omega\Delta\,|\omega\rangle_B\langle\omega|
+0\,|\mathrm{vac}\rangle_B\langle \mathrm{vac}|$, it is easy to check that
    $Z_B
    =
    {\rm Tr}\,e^{-\kappa H_B}=
    q^{-2}+q^{-1}+2+q+q^2$.  
    Then by Eq.~\eqref{Cath}, the corresponding nonequilibrium free energy cost is
\begin{equation}
   \Delta F_\kappa(\beta_B)
    =
    \kappa^{-1}
    \log\left(q^{-2}+q^{-1}+2+q+q^2\right).
    \label{eq:qutrit_athermality_work}
\end{equation}
It means that by consuming this amount of athermality resource, the EnTO channel can be realized by a tabletop time-reversible and covariant dilation.

}




Another interesting application of Theorem~\ref{theorem} is the following corollary, which states that almost every CPTP map has a dilation that is tabletop time reversible.
\begin{corollary}  \label{corollary_2}
    For any CPTP map $\mathcal E$, if a state $\alpha$ exists such that $\alpha'\equiv \mathcal{E}(\alpha)$ is full rank, there is a dilation of $\mathcal E$ that is tabletop time reversible with respect to the prior $\alpha$. 
\end{corollary}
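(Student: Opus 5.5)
The plan is to reduce Corollary~\ref{corollary_2} to Theorem~\ref{theorem} by taking a trivial Hamiltonian. Set $H_S=0$ (equivalently $H_S\propto\mathbb I$). Then every CPTP map $\mathcal E$ commutes with $U_t$, since $U_t$ is the identity channel, and every state $\alpha$ satisfies $[\alpha,H_S]=0$. Hence $\mathcal E$ is phase covariant, $\alpha$ is a covariant prior, and by hypothesis $\alpha'=\mathcal E(\alpha)$ is full rank. Theorem~\ref{theorem} then gives a dilation $(U,|\mathrm{vac}\rangle\langle\mathrm{vac}|)$ of $\mathcal E$ with $\mathcal R_{|\mathrm{vac}\rangle\langle\mathrm{vac}|}=\widehat{\mathcal E}_\alpha$. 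TTR is a property of the dilation alone and does not involve any Hamiltonian, so forgetting the covariance requirement yields the claim.

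There is one point to check. The Hamiltonian enters only through the block decomposition of $U$ and through the labels $\omega$ in the Kraus operators. With $H_S=0$, every Kraus operator has Bohr frequency $\omega=0$, so the Kraus decomposition \eqref{kraus_representation} is just an arbitrary Kraus decomposition $\{K_s\}$. The bath becomes $\mathrm{span}\{|\mathrm{vac}\rangle,|s\rangle\}$ with $H_B=0$, and the total Hamiltonian has a single eigenspace. In that case the proof of Theorem~\ref{theorem} collapses to a single block $U_{\bar E=0}$ acting on $\mathcal H_S\otimes\mathcal H_B$.

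To be safe, I will redo that step directly rather than only citing the theorem. The specified columns are $|f_a\rangle=\sum_s K_s|a\rangle|s\rangle$ and the specified rows are $\langle g_a|=\sum_s\langle a|C_s\otimes\langle s|$, with $C_s=\alpha'^{-1/2}K_s\alpha^{1/2}$ and $a=1,\dots,d_S$. Orthonormality of the columns follows from $\sum_s K_s^\dagger K_s=\mathbb I$, and orthonormality of the rows follows from Eq.~\eqref{eq:C-unital}. Their overlaps with $|a\rangle|\mathrm{vac}\rangle$ vanish by construction, which also fixes $\langle\mathrm{vac}|U|\mathrm{vac}\rangle=0$. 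This gives $m=d_S$, while $N=d_S(1+r)$, where $r$ is the number of Kraus operators. The hypothesis $m\le N/2$ of Lemma~\ref{lemma1} therefore needs $r\ge1$, which always holds, so Lemma~\ref{lemma1} completes $U$ to a unitary.

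The only real obstacle is this dimension count together with the Petz map's well-definedness. The dimension condition is automatic, and full rank of $\alpha'$ is exactly what makes $\alpha'^{-1/2}$ and the $C_s$ well defined. I also need to note that the reverse channel built from $U^\dagger$ with bath state $|\mathrm{vac}\rangle$ has Kraus operators $\langle\mathrm{vac}|U^\dagger|s\rangle=C_s^\dagger$. These reproduce Eq.~\eqref{reverse_map_kraus}, which completes the proof.
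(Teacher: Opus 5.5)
Your proposal is correct and follows the paper's proof. Setting $H_S=0$ makes every CPTP map and every prior covariant, so the construction of Theorem~\ref{theorem} collapses to a single block completed via Lemma~\ref{lemma1}, and your explicit checks of orthonormality, zero overlap, and $m\le N/2$ spell out what the paper calls ``minor modifications.'' One index slip: the reverse-channel Kraus operators are $\langle s|U^\dagger|\mathrm{vac}\rangle=(\langle\mathrm{vac}|U|s\rangle)^\dagger=C_s^\dagger$, whereas $\langle\mathrm{vac}|U^\dagger|s\rangle$ as you wrote it equals $K_s^\dagger$.
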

\begin{proof}
When the system Hamiltonian is trivial \(H_S = 0\), the time-translation operations in Eq.~\eqref{time-translation} reduce to the identity map, namely, $U_t(\rho)=\rho,\forall\, t\in\mathbb R,\rho$. Hence the covariance condition in Eq.~\eqref{eq:phase_covariance} is automatically satisfied by every CPTP map. In this case, the proof of Theorem~\ref{theorem} applies with only minor modifications:
(a) the system basis is labeled solely by the multiplicity index \(\{|a\rangle\}_a\); (b) the only Bohr frequency is \(\omega=0\), therefore, the Kraus operators are indexed solely by \(s\), and the auxiliary bath basis can be written as \(\{|\mathrm{vac}\rangle, |s\rangle\}_s\); (c) the condition \([\alpha,H_S]=0\) becomes trivial, so any prior \(\alpha\) with full-rank output \(\alpha'=\mathcal E(\alpha)\) is admissible; and (d) the unitary in Eq.~\eqref{eq: U block} only consists of a single block \(U_{E}\). This completes the proof of Corollary~\ref{corollary_2}.
\end{proof}

Corollary~\ref{corollary_2} provides a constructive paradigm for exploiting TTR. Whenever one wishes to use the Petz recovery map together with the forward channel to reconstruct the input information, such a construction may offer practical convenience. In particular, the Petz recovery map can be implemented using the same device as the forward channel, with the only changes being to prepare the same bath state and invert the original unitary. This is particularly meaningful because the Petz map already plays a central role in quantum error correction: it is known to be near optimal for approximately correctable codes~\cite{barnum2002reversing,ng2010simple}.

\section{Gaussian systems}

In this section, we study the TTR for Gaussian covariant operations. We first briefly review the characterization of Gaussian states and operations, and then identify a class of Gaussian covariant operations \fangzhen{for} which no Gaussian dilation is tabletop time reversible. This is in contrast with our result for discrete-variable systems and reveals the constraints on TTR imposed by Gaussianity.

\subsection{Gaussian framework and notation}

For an $n$-mode bosonic system, let $\hat{\bm r}:=(\hat x_1,\hat p_1,\dots,\hat x_n,\hat p_n)^T$ be the vector of quadrature operators, satisfying the commutation relations 
\begin{equation}
    [\hat r_k,\hat r_l]=i\Omega_{kl},\qquad
    \Omega=\bigoplus_{k=1}^n
\begin{pmatrix}
0&1\\
-1&0
\end{pmatrix}.
\end{equation}
A Gaussian state is completely characterized by its displacement vector $\vec d$ and covariance matrix $V$, defined by
\begin{equation}
d_k:=\langle \hat r_k\rangle,
\qquad
V_{kl}:=\frac{1}{2}\langle \{\hat r_k-d_k,\hat r_l-d_l\}\rangle,
\end{equation}
where $\{\cdot,\cdot\}$ denotes the anticommutator.
The covariance matrix satisfies $V=V^T$ and $V+i\Omega\succeq 0$.
A Gaussian state $\rho$ is strictly positive if and only if its covariance matrix satisfies $V+i\Omega\succ 0$; such a state is called a faithful state.


A Gaussian channel transforms covariance matrices and displacement vectors as~\cite{weedbrook2012gaussian}
\begin{equation}
 V \longmapsto XVX^T+Y,\qquad \vec d\longmapsto X\vec d+\vec \delta,
\end{equation}
where $\vec \delta$ is a real vector, $X$ is a real matrix, and $Y$ is a symmetric matrix. The matrices $X$ and $Y$ further satisfy 
\begin{equation} \label{positive_condition}
Y+i\Omega-iX\Omega X^T\succeq 0.
\end{equation}
In particular, when a Gaussian channel is a Gaussian unitary, it follows that $X=S$ and $Y=0$~\cite{weedbrook2012gaussian,gianfelici2021hierarchy}, where $S$ is a symplectic matrix satisfying $S\Omega S^T=\Omega$.

\textbf{Gaussian Petz recovery map:}
Let $\alpha$ be an $n$-mode Gaussian prior with displacement vector ${\vec d_\alpha}$ and covariance matrix $V_\alpha$.
After the action of a Gaussian operation characterized by $(X,Y)$, the covariance matrix and displacement vector of the corresponding output state $\alpha'$ are
\begin{equation}\label{V_alpha}
V_{\alpha'} := X V_\alpha X^{T}+Y,\qquad {\vec d_{\alpha'}} := X {\vec d_\alpha}+\vec \delta.
\end{equation}
Assume that $\alpha'$  is a full-rank $n$-mode Gaussian state with displacement vector $\vec d_{\alpha'}$ and covariance matrix
$V_{\alpha'}$.
Then the Gaussian Petz recovery map
$\widehat{\mathcal E}_{\alpha}$ acts on covariance matrices and displacement vectors as follows~\cite{lami2018approximate}:
\begin{equation}\label{eq:gaussian_petz_recovery}
\widehat{\mathcal E}_{\alpha}:\quad
V \longmapsto X_{\mathrm P} V X_{\mathrm P}^{T}+Y_{\mathrm P},\qquad
\vec d \longmapsto X_{\mathrm P} \vec d+\vec \delta_{\mathrm P}.
\end{equation}
Here
\begin{align}
X_{\mathrm P}
&:= \sqrt{I+(V_\alpha\Omega)^{-2}} V_\alpha X^{T}
\Big(\sqrt{I+(\Omega V_{\alpha'})^{-2}}\Big)^{-1} V_{\alpha'}^{-1},
\label{XP}
\\
Y_{\mathrm P}
&:= V_\alpha - X_{\mathrm P} V_{\alpha'} X_{\mathrm P}^{T},
\label{YP}
\\
\vec \delta_{\mathrm P}
&:= {\vec d_\alpha} - X_{\mathrm P} {\vec d_{\alpha'}}
= {\vec d_\alpha} - X_{\mathrm P}(X {\vec d_\alpha}+\vec \delta).
\label{deltaP}
\end{align}

\subsection{No TTR of single-mode Gaussian amplification channel}
Amplification channels are Gaussian phase-covariant channels~\cite{gianfelici2021hierarchy}.
A single-mode Gaussian amplification channel is characterized as
\begin{equation} \label{Gaussian_amplification}
\mathcal{E}_{A}:\qquad
V \longmapsto \tau V + (\tau-1)(2\bar n+1) I,
\qquad
\tau>1 .
\end{equation}
It is typically implemented by preparing a single-mode auxiliary in a Gibbs state with mean excitation number $\bar n$ and applying a two-mode squeezing operation~\cite{koukoulekidis2025symmetry,weedbrook2012gaussian}.
This is a Gaussian dilation of the amplification channel. 
Note also that this dilation is not covariant with respect to the free Hamiltonian $H_S+H_B$, because the two-mode squeezing does not preserve the total energy.
In the following, we will show that any Gaussian (but not necessarily covariant) dilation of an amplification channel is not tabletop time reversible.

\begin{theorem} \label{theorem_2}
Let $\mathcal{E}_{A}$ be the single-mode Gaussian amplification channel. There exists no Gaussian dilation of $\mathcal{E}_{A}$ whose tabletop reverse channel
coincides with the Petz recovery map with respect to any faithful single-mode Gaussian prior $\alpha$.
Consequently, not every Gaussian phase-covariant operation admits a Gaussian dilation that realizes TTR.
\end{theorem}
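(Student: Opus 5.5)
The plan is to compare the matrix $X$ of the tabletop reverse channel with the matrix $X_{\mathrm P}$ of the Gaussian Petz map. A Gaussian channel fixes its $X$ uniquely, because $X$ governs the action on displacement vectors $\vec d\mapsto X\vec d+\vec\delta$. So it suffices to show $X_{\mathrm R}\neq X_{\mathrm P}$. I will do this with a determinant comparison, which is insensitive to the choice of the reverse bath state $\beta'$.

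\textbf{Step 1: structure of the forward dilation.} Fix an arbitrary Gaussian dilation: a bath of $m$ modes in a Gaussian state $\beta$ with covariance matrix $V_\beta$, and a Gaussian unitary with symplectic matrix
\begin{equation*}
S=\begin{pmatrix} A & B\\ C & D\end{pmatrix}, \qquad S\Omega S^T=\Omega,
\end{equation*}
where $A$ is $2\times 2$. The induced channel on the system has $X=A$ and $Y=BV_\beta B^T$. For this to equal $\mathcal E_A$ in Eq.~\eqref{Gaussian_amplification}, we need $AVA^T=\tau V$ for every symmetric $V$. The additive parts, which fix $Y$, must match separately.

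\textbf{Step 2: pin down $A$.} Write $A=\sqrt\tau R$. Then $RVR^T=V$ for all symmetric $V$. Taking $V=I$ gives $R$ orthogonal, and then $R$ commutes with every symmetric matrix, so $R=\pm I$. Hence $A=\pm\sqrt\tau\, I$ and $\det A=\tau$.

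\textbf{Step 3: the reverse channel's $X$.} The tabletop reverse channel uses $U^\dagger$, whose symplectic matrix is $S^{-1}=\Omega^T S^T\Omega$. Its system block is $X_{\mathrm R}=\Omega_1^T A^T\Omega_1$. For $2\times 2$ matrices this equals $(\det A)A^{-1}$, so $X_{\mathrm R}=\pm\sqrt\tau\, I$ and $\det X_{\mathrm R}=\tau$. This holds regardless of $\beta'$, since $\beta'$ only affects $Y_{\mathrm R}$ and $\vec\delta_{\mathrm R}$.

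\textbf{Step 4: the Petz map's $X$.} Take any faithful single-mode prior with $\det V_\alpha=\nu_\alpha^2>1/4$ in the paper's convention; I write $\nu>1$ after normalizing the vacuum variance appropriately. For a single mode, $(V\Omega)^2=-(\det V)\,I$, so $\sqrt{I+(V\Omega)^{-2}}$ is the scalar $c_V=\sqrt{1-1/\det V}$ times $I$, with the normalization adjusted to the paper's units. Eq.~\eqref{XP} then gives
\begin{equation*}
X_{\mathrm P}=\frac{c_\alpha}{c_{\alpha'}}\,V_\alpha X^T V_{\alpha'}^{-1},
\end{equation*}
and therefore
\begin{equation*}
\det X_{\mathrm P}=\tau\,\frac{\det V_\alpha-1}{\det V_{\alpha'}-1}.
\end{equation*}
Now $V_{\alpha'}=\tau V_\alpha+kI$ with $k=(\tau-1)(2\bar n+1)>0$. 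This gives
\begin{equation*}
\det V_{\alpha'}=\tau^2\det V_\alpha+\tau k\,\mathrm{Tr}V_\alpha+k^2>\det V_\alpha,
\end{equation*}
so $\det X_{\mathrm P}<\tau=\det X_{\mathrm R}$. The two maps therefore differ for every faithful prior, every dilation and every $\beta'$. The final claim of the theorem follows because $\mathcal E_A$ is phase covariant.

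\textbf{Main obstacle.} The step needing most care is Step 4. The scalar reduction of $\sqrt{I+(V\Omega)^{-2}}$ must be done with the correct normalization of the uncertainty bound; faithfulness guarantees $c_\alpha>0$. One must also confirm that $\det V_{\alpha'}>\det V_\alpha$ strictly under the channel, which uses $\tau>1$ and positivity of $V_\alpha$. A secondary point is justifying that equality of Gaussian channels forces equality of their $X$ matrices; I would argue this from the displacement action on coherent inputs.
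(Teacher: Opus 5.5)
Your proposal is correct and follows essentially the same route as the paper: you fix the system block of the dilation's symplectic matrix as $\sqrt{\tau}I$ (you more carefully allow $\pm\sqrt{\tau}I$), read off $X_{\mathrm{rev}}=\Omega^{-1}S_{11}^T\Omega$ with $\det X_{\mathrm{rev}}=\tau$, and contrast it with $\det X_{\mathrm P}=\tau(\nu-1)/(\nu'-1)$ obtained from Eq.~\eqref{XP}. The only difference is cosmetic: the paper shows $\det X_{\mathrm P}<1$ using $\nu'>\tau^2\nu$, whereas you show $\det X_{\mathrm P}<\tau$ using only $\nu'>\nu$, and that suffices equally well.
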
 

\begin{proof}
Consider an arbitrary Gaussian dilation of a single-mode amplification channel $\mathcal{E}_{A}$. The auxiliary is an $n_B$ mode system with covariance matrix $V_B$. The symplectic matrix corresponding to the global Gaussian unitary can be written in block form as
\begin{equation}
S=\begin{pmatrix}
S_{11} & S_{12}\\
S_{21} & S_{22}
\end{pmatrix},
\end{equation}
where $S_{11}$ is a $2\times2$ matrix and $S_{22}$ is a $2n_B\times 2n_B$ matrix.
From $S^{T}\Omega S=\Omega$, one has $S^{-1}=\Omega^{-1}S^{T}\Omega$, and it follows that
\begin{equation}\label{eq: backward U}
S^{-1}
=
\begin{pmatrix}
\Omega^{-1} S_{11}^{T} \Omega
&
\Omega^{-1} S_{21}^{T} \Omega_{B}
\\[2mm]
\Omega_{B}^{-1} S_{12}^{T} \Omega
&
\Omega_{B}^{-1} S_{22}^{T} \Omega_{B}
\end{pmatrix}.
\end{equation}
The forward channel acts on the covariance matrices as
\begin{equation} \label{Gaussian_forward}
\left.V \longmapsto S(V\oplus V_{B})S^{T}\right|_{S}=S_{11} V S_{11}^T + S_{12} V_B S_{12}^T,
\end{equation}
where $\cdot|_S$ denotes taking the top-left $2\times 2$ block corresponding to the system $S$.

Because the channel is an amplification channel as in Eq.~\eqref{Gaussian_amplification}, by comparing Eqs.~\eqref{Gaussian_forward} and~\eqref{Gaussian_amplification}, we have
\begin{equation}\label{eq: forward X}
S_{11} = X_\mathrm{for} = \sqrt{\tau}\, I .
\end{equation}
Now consider the tabletop reverse channel generated by the Gaussian unitary $U^\dagger$,
\begin{equation}\label{eq:reverse}
V \longmapsto \left. S^{-1}(V\oplus V_{B}^{\prime}){S^{T}}^{-1}\right|_{S}=X_\mathrm{rev}VX_\mathrm{rev}^T+Y_\mathrm{rev}.
\end{equation}
Using Eq.~\eqref{eq: backward U}, we have
\begin{equation} \label{X_rev_amplification}
X_{\mathrm{rev}}
=
\Omega^{-1}S_{11}^T\Omega
=
\sqrt{\tau}\,I .
\end{equation}
Hence, the determinant of $X_{\mathrm{rev}}$ corresponding to the reverse channel is
\begin{equation}
\det X_{\mathrm{rev}}=\tau>1.
\end{equation}

Next, we calculate the Petz recovery map of $\mathcal{E}_A$.
For a $2\times 2$ real symmetric matrix $V$, it is easy to check that $(V\Omega)^2=-\det(V)\,I$.
Substituting this and Eq.~\eqref{eq: forward X} into Eq.~\eqref{XP}, we obtain
\begin{equation} \label{eq:XP_single_mode}
X_\mathrm{P}
=
\sqrt{\tau}\,
\frac{\sqrt{1-\nu^{-1}}}{\sqrt{1-\nu'^{-1}}}\,
V_\alpha V_{\alpha'}^{-1},
\end{equation}
where $\nu\equiv\det V_\alpha>1$ and $\nu'\equiv\det V_{\alpha'}>1$.
It follows that
\begin{equation}\label{detXP}
\det X_\mathrm{P}
=
\tau\,\frac{\nu-1}{\nu'-1}.
\end{equation}
From $V_{\alpha'}=\tau V_\alpha+ (\tau-1)(2\bar n+1)I$, we have $\nu'>\tau^2\nu>\nu$. A direct calculation leads to $\det X_\mathrm{P}<1$. Therefore, for an amplification map, the Petz recovery map with respect to any faithful prior does not equal the tabletop reverse channel corresponding to any Gaussian dilation. This completes the proof of Theorem~\ref{theorem_2}.
\end{proof}

 
Intuitively, a Gaussian amplification channel amplifies the energy of an input state regardless of its initial energy, so the channel that is designed to recover the initial state should decrease the energy, and the Petz recovery map must be a loss channel.
However, as we have shown in the proof, the tabletop reverse channel for any dilation of an amplification channel would also amplify the energy for any input state. 
Therefore, the Petz recovery map cannot coincide with the tabletop
reverse channel of any Gaussian dilation of $\mathcal{E}_{A}$.

The conclusion in the Gaussian setting is qualitatively opposite to the finite-dimensional one. \fangzhen{In Appendix~\ref{app:asymptotic_nonGaussian_TTR_amp}, we also construct a phase-covariant but non-Gaussian dilation that realizes TTR.} This reflects the fact that Gaussianity
imposes highly restrictive structural constraints on dilations and recovery processes. Moreover, such
constraints can obstruct the implementability of TTR.
It is consistent with a broader theme in continuous-variable quantum information: Gaussianity often imposes strong structural constraints and leads to no-go results. Representative examples include the impossibility of Gaussian entanglement distillation under Gaussian operations~\cite{eisert2002distilling,fiuravsek2002gaussian} and the no-go theorem for Gaussian quantum error correction~\cite{niset2009no}. From this perspective, our result adds TTR to the list of physical properties that can be fundamentally restricted by the Gaussian framework.

\fangzhen{For a single-mode Gaussian channel, the condition as in Eq.~\eqref{positive_condition} can be simplified as follows.
One can check that for an arbitrary real \(2\times2\) matrix
$X\Omega X^{T}=(\det X)\Omega$.
Hence, Eq.~\eqref{positive_condition} becomes
$\det\!\left[
Y+i(1-\det X)\Omega
\right]
\geq 0.$
Therefore,
$\det Y\geq(1-\det X)^2$,
which is equivalently written as
\begin{equation} \label{detY_detX_condition}
\sqrt{\det Y}\geq\left|1-\det X\right|.
\end{equation}
In other words, any single-mode Gaussian channel $(X,Y)$ should satisfy Eq.~\eqref{detY_detX_condition}. This corresponds to the shaded region in Fig.~\ref{fig:Gaussian}.
}

\fangzhen{

Single-mode Gaussian covariant channels with $\det X<1$ are loss channels. These channels can be realized by mixing the system mode with an auxiliary mode initially in Gibbs state via a beam-splitter. As shown in Appendix~\ref{app_B}, every single-mode Gaussian loss channel, except those on the quantum-limited loss boundary, admits a Gaussian dilation that is both phase covariant and tabletop time reversible. These channels correspond to the green area in Fig.~\ref{fig:Gaussian}.

By contrast, as shown in Theorem~\ref{theorem_2}, amplification channels, which correspond to those with $\det X>1$, do not admit a Gaussian TTR dilation. These channels correspond to the red area in Fig.~\ref{fig:Gaussian}.


For channels with $\det X=1$, which correspond to the
red line in Fig.~\ref{fig:Gaussian}, we show that no nontrivial Gaussian
TTR dilation exists. 
Setting $\tau=1$ in Eqs.~\eqref{eq: backward U}--\eqref{X_rev_amplification}, we find that the tabletop reverse channel satisfies
$\det X_{\mathrm{rev}}=1.$ 
However, for any nonzero phase-covariant Gaussian noise $Y$,
$
V_{\alpha'}=XV_\alpha X^T+Y
$
implies $\det V_{\alpha'}>\det V_\alpha$. According to Eq.~\eqref{XP},
\begin{equation}
\det X_{\mathrm P}
=
\frac{\det V_\alpha-1}{\det V_{\alpha'}-1}
<1
=
\det X_{\mathrm{rev}}.
\end{equation}
Therefore, a Gaussian TTR dilation exists only if $Y=0$. It is easy to check that channels with  
$
\bigl(\sqrt{\det Y},\det X\bigr)=(0,1)
$
are unitary channels.
}

\begin{figure}[t]
    \centering
    \includegraphics[width=\columnwidth]{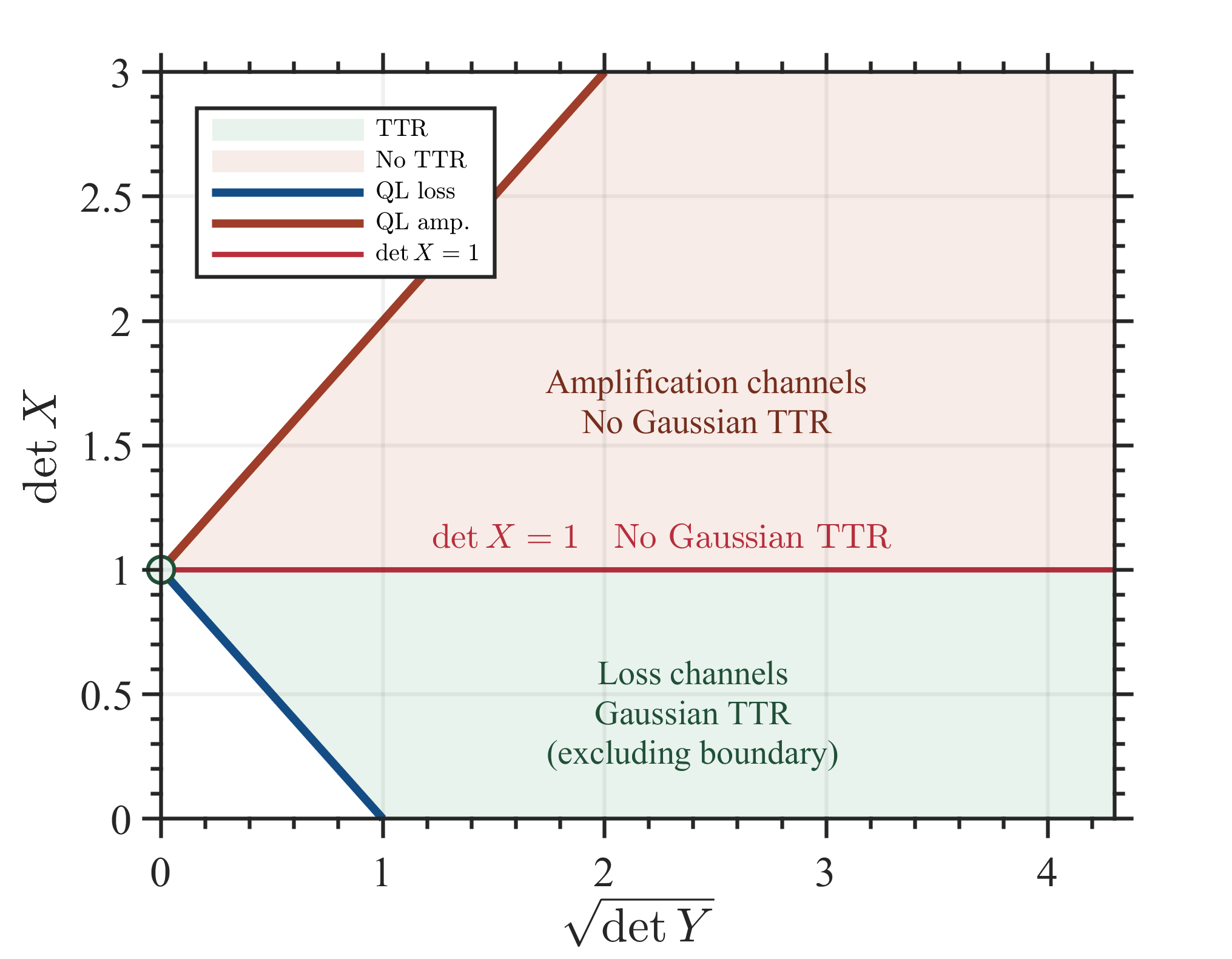}
    \caption{
    \fangzhen{Schematic illustration of TTR for single-mode phase-covariant Gaussian dilations. Complete positivity requires $\sqrt{\det Y}\geq |1-\det X|$, so that the physically allowed channels lie between the quantum-limited loss boundary $\det X=1-\sqrt{\det Y}$ and the quantum-limited amplification boundary $\det X=1+\sqrt{\det Y}$. The shaded region with $\det X<1$ represents loss channels that admit Gaussian dilations realizing TTR, excluding the quantum-limited loss boundary. By contrast, amplification channels with $\det X>1$ do not admit Gaussian dilations realizing TTR. The horizontal line $\det X=1$ admits TTR only at the point $(0,1)$, corresponding to the
trivial Gaussian unitary case.} 
    }
    \label{fig:Gaussian}
\end{figure}

\section{Conclusions and Outlook}
This paper analyzes TTR of phase-covariant operations. Finite-dimensional and Gaussian systems lead to markedly different conclusions.
In finite-dimensional systems, for phase-covariant operations that admit a covariant prior state $\alpha$ with full-rank output, we show that one can choose this state as the prior and construct a covariant dilation that is tabletop time-reversible. 
More generally, we show that every finite-dimensional CPTP map with a full-rank output state admits a dilation realizing TTR. 
In contrast, for phase-covariant Gaussian \fangzhen{systems}, the single-mode Gaussian amplification channel provides an explicit example showing that there are Gaussian phase-covariant operations that do not admit a Gaussian dilation that achieves TTR. 
Our results provide a method for constructing devices that achieve TTR for both covariant channels and general CPTP maps, and further reveal the Gaussian constraint on achieving TTR.

Finally, we identify two open questions and possible directions for future work.

First, the symmetry studied in this work corresponds only to the
time-translation symmetry generated by the system Hamiltonian, namely the
$U(1)$-covariant case. Our finite-dimensional dilation relies on this
specific symmetry through the Bohr-frequency decomposition and the associated
energy-preserving covariant dilation. It is therefore natural to ask whether, for channels covariant with respect to more general \fangzhen{symmetry groups $G$~\cite{marvian2012symmetry}}, one can still construct a covariant
dilation that realizes TTR.

Second, our counterexample in the continuous-variable setting is established only within
the Gaussian framework. This leaves open the possibility that TTR
might still be realizable beyond the Gaussian framework. Moreover, continuous-variable systems also admit Kraus-operator representations, although the number of Kraus operators is generally infinite~\cite{ivan2011operator}. This leaves open the possibility that the construction of Theorem~\ref{theorem} may be extended to the continuous-variable setting, although the infinite-dimensional nature of the problem and the potential occurrence of divergences may create significant technical challenges.
Hence, it would be interesting to study whether phase-covariant operations in continuous-variable systems can achieve TTR by allowing non-Gaussian dilations.

\begin{acknowledgments}
This work is supported by Shandong Provincial Quantum Joint Fund, under Grants No.ZR2025LLZ002 and No.ZR2025LLZ001.
\end{acknowledgments}

\appendix

\section{Proof of Lemma 1} \label{app_A}

Our proof is based on two facts. 
Firstly, $M$ is unitary if and only if $M'=U_1MU_2^\dagger$ is unitary, where $U_1$ and $U_2$ are two unitary matrices. 
Secondly, if the specified entries of $M'$ form a unitary operator on a subspace, the complement space of which is a subset of the space on which the unspecified entries act, then the unspecified entries can be chosen such that $M'$ is a unitary matrix.
For example, consider a square matrix $M'$ that can be written as
\begin{equation}\label{eq:M prime}
    M'=U_s+M_{u}',
\end{equation} 
where $U_s$ is a unitary acting on a subspace spanned by $\{|i\rangle\}_{i=1}^{2m}$ and $M_{u}'$ is an unspecified matrix acting on a subspace that is a superset of $\mathrm{span}\{|i\rangle\}_{i=2m+1}^N$. 
Then one may choose $M_u'=\mathbf{0}\oplus U_u$, where $\mathbf{0}$ is a zero matrix acting on $\mathrm{span}\{|i\rangle\}_{i=m+1}^{2m}$ and $U_u$ is an arbitrary unitary matrix acting on $\mathrm{span}\{|i\rangle\}_{i=2m+1}^{N}$. 
It follows that $M'$ becomes $M'=U_s\oplus U_u$, which is apparently a unitary matrix.

\fangzhen{
Next, we construct unitary matrices $U_1$ and $U_2$ such that $M'$ has the form of Eq.~\eqref{eq:M prime}.
To start with, based on Eqs.~\eqref{eq:orth}~and~\eqref{zero}, we check that both $\bigl\{|i\rangle\bigr\}_{i\in\mathcal J_g}\cup\bigl\{|f_j\rangle\bigr\}_{j\in\mathcal J_f}$ and $\bigl\{|j\rangle\bigr\}_{j\in\mathcal J_f}\cup\bigl\{|g_i\rangle\bigr\}_{i\in\mathcal J_g}$ are sets of $2m$ orthonormal vectors. Therefore, one can construct orthonormal vectors $\{|u_k\rangle\}_{k=2m+1}^{N}$ and $\{|v_k\rangle\}_{k=2m+1}^{N}$, such that both $
\bigl\{|i\rangle\bigr\}_{i\in\mathcal J_g}
\cup
\bigl\{|f_j\rangle\bigr\}_{j\in\mathcal J_f}
\cup
\bigl\{|u_k\rangle\bigr\}_{k=2m+1}^{N}
$
and
$
\bigl\{|j\rangle\bigr\}_{j\in\mathcal J_f}
\cup
\bigl\{|g_i\rangle\bigr\}_{i\in\mathcal J_g}
\cup
\bigl\{|v_k\rangle\bigr\}_{k=2m+1}^{N}
$
form complete orthonormal bases of \(\mathbb C^N\).

Next, we define $U_1$ and $U_2$ as
\begin{align}
U_1
={}&
\sum_{i\in\mathcal J_g}
|\bar i\rangle\langle i|
+
\sum_{j\in\mathcal J_f}
|m+\bar j\rangle\langle f_j|
+
\sum_{k=2m+1}^{N}
|k\rangle\langle u_k|
\label{eq:explicit_U1}
\end{align}
and
\begin{align}
U_2
={}&
\sum_{j\in\mathcal J_f}
|\bar j\rangle\langle j|
+
\sum_{i\in\mathcal J_g}
|m+\bar i\rangle\langle g_i|
+
\sum_{k=2m+1}^{N}
|k\rangle\langle v_k|,
\label{eq:explicit_U2}
\end{align}
where $\bar i=1,\dots,m$ is the order number of $i$ in the set $\mathcal{J}_g$ and $\bar j=1,\dots,m$ is the order number of $j$ in the set $\mathcal{J}_f$.
It follows that
\begin{align}
        M' & =  U_1 M U_2^\dagger \nonumber\\
    & = \sum_{ j=1}^m |m+ j\rangle\langle j|+\sum_{ i=1}^m | i\rangle\langle m+ i| + M_u',
\end{align}
where
\begin{align}
        M_u'&=\sum_{\substack{i\in\mathcal J_g^c, j\in\mathcal J_f^c}}
M_{ij}U_1|i\rangle\langle j|U_2^\dagger\nonumber\\
& =  \sum_{i,j=m+1}^{N} M'_{ij}\,|i\rangle\langle j|.
\end{align}
The reason for the second equation is as follows. Since $\mathrm{span}\{|i\rangle\}_{i\in\mathcal{J}_g}$ is orthogonal to $\mathrm{span}\{|i\rangle\}_{i\in\mathcal{J}_g^c}$, the subspace $\mathrm{span}\{U_1|i\rangle\}_{i\in\mathcal{J}_g}$ is orthogonal to $\mathrm{span}\{U_1|i\rangle\}_{i\in\mathcal{J}_g^c}$. 
Therefore, an orthonormal basis of $\mathrm{span}\{U_1|i\rangle\}_{i\in\mathcal{J}_g^c}$ can be chosen as $\{|i\rangle\}_{i=m+1}^N$. Similarly, the basis of $\mathrm{span}\{U_2|j\rangle\}_{j\in\mathcal{J}_f^c}$ can be chosen as $\{|j\rangle\}_{j=m+1}^N$.

Now we say $M'$ is in the form of Eq.~\eqref{eq:M prime} for the following reasons. 
Firstly, the specified part $U_s=\sum_{ j=1}^m |m+ j\rangle\langle j|+\sum_{ i=1}^m | i\rangle\langle m+ i|$ is a unitary acting on a subspace spanned by $\{|i\rangle\}_{i=1}^{2m}$.
Secondly, the unspecified part $M_u'$ acts on a subspace spanned by $\{|i\rangle\}_{i=m+1}^N$, which is a superset of $\mathrm{span}\{|i\rangle\}_{i=2m+1}^{N}$.
Therefore, one may choose $M_u'$ such that $M'=U_s\oplus U_u$. It follows that
\begin{align}\label{eq:explicit_M_without_B}
M
&=
U_1^\dagger
\left(
U_s\oplus U_u
\right)
U_2\nonumber\\
&={}
\sum_{j\in\mathcal J_f}
|f_j\rangle\langle j|
+
\sum_{i\in\mathcal J_g}
|i\rangle\langle g_i|
+
\sum_{k,l=2m+1}^{N}
\langle k|U_u|l\rangle
|u_k\rangle\langle v_l|.
\end{align}

Since every \(|u_k\rangle\) is orthogonal to
\(\{|i\rangle\}_{i\in\mathcal J_g}\), it has support only on
\(\operatorname{span}\{|i\rangle\}_{i\in\mathcal J_g^c}\).
Similarly, every \(|v_l\rangle\) has support only on
\(\operatorname{span}\{|j\rangle\}_{j\in\mathcal J_f^c}\).
Thus, comparing Eq.~\eqref{eq:explicit_M_without_B} with
Eq.~\eqref{eq:partial-M}, the unspecified entries are
\begin{equation}\label{eq:general_explicit_Mij}
M_{ij}
=
\sum_{k,l=2m+1}^{N}
\langle k|U_u|l\rangle
\langle i|u_k\rangle
\langle v_l|j\rangle,
\qquad
i\in\mathcal J_g^c,\quad
j\in\mathcal J_f^c.
\end{equation}

For simplicity, we choose $U_u=\mathbb{I}_{N-2m}$ and obtain
\begin{equation}\label{eq:explicit_Mij_without_U}
M_{ij}
=
\sum_{k=2m+1}^{N}
\langle i|u_k\rangle
\langle v_k|j\rangle,
\quad
i\in\mathcal J_g^c,\
j\in\mathcal J_f^c.
\end{equation}
Equivalently,
\begin{equation}\label{eq:explicit_M_final}
M
=
\sum_{j\in\mathcal J_f}
|f_j\rangle\langle j|
+
\sum_{i\in\mathcal J_g}
|i\rangle\langle g_i|
+
\sum_{k=2m+1}^{N}
|u_k\rangle\langle v_k|.
\end{equation}
In other words, when we choose the unspecified elements in $M$ as in Eq.~\eqref{eq:explicit_Mij_without_U}, the resulting matrix $M$ is unitary.
}

\section{Tabletop Reversibility of a Single-Mode Gaussian Loss Channel} \label{app_B}
A single-mode Gaussian loss channel can be \fangzhen{described} as follows~\cite{weedbrook2012gaussian}
\begin{equation} \label{appendix_B_loss_channel}
V \longmapsto \tau V +(1 - \tau)(2\overline{n} + 1)I,\ {\vec d} \longmapsto \sqrt{\tau} {\vec d},\  0<\tau  <1\end{equation}
where $\overline{n}$ is a thermal number, which is nonnegative. The thermal state we will use is $
V_{\mathrm{th}} = (2\bar{n}+1)\, I
$ and ${\vec d_{th}}=0$. Using Eq.~\eqref{eq:gaussian_petz_recovery}, and assuming that both the bath state and the prior are thermal states, we derive the Petz recovery map in Gaussian form as
\begin{widetext}
    \begin{equation}\label{loss_eq1}
V \longmapsto \frac{4  \overline{n}_{2}  (1 + \overline{n}_{2}) \tau}{-1 + \bigl( \tau  (1 + 2  \overline{n}_{2}) + (1 - \tau)  (1 + 2  \overline{n}_{1}) \bigr)^2}V + ((2  \overline{n}_{2} + 1) - \frac{  \tau  \bigl( (2 \overline{n}_{2} + 1)^2 - 1 \bigr) \bigl( \tau  (2 \overline{n}_{2} + 1) + (1 - \tau)  (2 \overline{n}_{1} + 1) \bigr) }{ \bigl( \tau  (2  \overline{n}_{2} + 1) + (1 - \tau)  (2  \overline{n}_{1} + 1) \bigr)^2 - 1 }) I \\
\end{equation}
where $\overline{n}_1$ is the thermal number of the bath state of the forward channel and $\overline{n}_2$ is the thermal number of the prior state.
The Gaussian unitary of the Gaussian loss channel is~\cite{weedbrook2012gaussian}
\begin{equation}
    S=
\begin{pmatrix}
\sqrt{\tau I} & \sqrt{1-\tau }I\\
-\sqrt{1-\tau }I & \sqrt{\tau I}
\end{pmatrix}.
\end{equation}
As in Eq.~\eqref{eq:reverse}, we find that if a single-mode Gaussian loss channel has a tabletop reverse form, it is given by 
\begin{equation}\label{loss_eq2}
V \longmapsto \tau V + (1 - \tau)(2\overline{n}_{3} + 1)I
\end{equation}
where we choose its reverse bath state to also be a thermal state with thermal number  $\overline{n}_3$. \\
\fangzhen{If a single-mode Gaussian loss channel is tabletop time reversible, Eq.~\eqref{loss_eq1} and Eq.~\eqref{loss_eq2}  must be equal.} So it means that the coefficients of $V$ must be equivalent, and the same applies to $I$ as well. Therefore, we need 
\begin{equation}\label{C4}
\begin{aligned}
\tau
&=
\frac{4 \overline{n}_{2}(1+\overline{n}_{2})\tau}
{-1+\bigl(\tau(1+2\overline{n}_{2})+(1-\tau)(1+2\overline{n}_{1})\bigr)^2},
\\[0.5ex]
(1-\tau)(2\overline{n}_{3}+1)
&=
(2\overline{n}_{2}+1)
-\frac{
\tau\bigl((2\overline{n}_{2}+1)^2-1\bigr)
\bigl(\tau(2\overline{n}_{2}+1)+(1-\tau)(2\overline{n}_{1}+1)\bigr)
}{
\bigl(\tau(2\overline{n}_{2}+1)+(1-\tau)(2\overline{n}_{1}+1)\bigr)^2-1
}.
\end{aligned}
\end{equation}
\end{widetext}

Then we find that the only physically meaningful solution to the equation is $\overline{n}_1=\overline{n}_2=\overline{n}_3\neq 0$. By imposing the TTR constraint on the displacement vector ${\vec d}$, we obtain the same condition as Eq.~\eqref{C4}. So a single-mode Gaussian loss channel is tabletop time reversible when its prior state and environment state are the same thermal state with $\bar n\neq 0$.

For a faithful thermal prior, $\overline{n}_2>0$. From the first
equality in Eq.~\eqref{C4}, we obtain
\begin{equation}
\left[
\tau(1+2\overline{n}_2)
+(1-\tau)(1+2\overline{n}_1)
\right]^2
=
(1+2\overline{n}_2)^2.
\end{equation}
Since both sides have positive square roots and $0<\tau<1$, this
implies
\begin{equation}
\overline{n}_1=\overline{n}_2.
\end{equation}
Substituting this relation into the second equality in
Eq.~\eqref{C4} gives
\begin{equation}
(1-\tau)(2\overline{n}_3+1)
=
(1-\tau)(2\overline{n}_2+1),
\end{equation}
and hence $\overline{n}_3=\overline{n}_2$. Therefore,
\begin{equation}
\overline{n}_1=\overline{n}_2=\overline{n}_3.
\end{equation}
Thus, TTR requires the prior state to be the same thermal state as
the forward environment state.

\fangzhen{Finally, we explain the exclusion of the quantum-limited loss boundary in Fig.~\ref{fig:Gaussian}, which corresponds to $\bar n=0$ in Eq.~\eqref{appendix_B_loss_channel}. For the dilation considered above, the preceding relations imply $\bar n_1=\bar n_2=\bar n_3=0$, causing the denominators in Eq.~\eqref{loss_eq1} to vanish. This singularity reflects a general obstruction: TTR for a single-mode loss channel requires the prior to coincide with the corresponding thermal state. On the quantum-limited loss boundary, the prior needs to be a zero-temperature thermal state, which is non-faithful, and hence the Petz recovery map is not well defined. A detailed proof is given below.

Using Eq.~\eqref{XP} and
$(V\Omega)^2=-\det(V)I$, the linear part of the Petz recovery map is
\begin{equation} \label{Xp_loss_channel_appendix_B}
X_{\mathrm P}
=
\sqrt{\tau}\,\sqrt{
\frac{1-(\det V_\alpha)^{-1}}
{1-(\det V_{\alpha'})^{-1}}
}\,
V_\alpha V_{\alpha'}^{-1}.
\end{equation}
Moreover, for any
Gaussian dilation of the loss channel, the expression for $X_{\mathrm{rev}}$ in
Eq.~\eqref{X_rev_amplification} remains valid.
The TTR condition $X_{\mathrm P}=X_{\mathrm{rev}}$ therefore implies
$
V_{\alpha'}=V_\alpha.
$
From Eq.~\eqref{appendix_B_loss_channel}, we obtain
\begin{equation}\label{eq:loss_prior_bath}
V'_\alpha=V_\alpha=(2\overline{n} + 1)I.
\end{equation}
On the quantum-limited loss boundary, $\overline{n}=0$, so that
$V'_\alpha=V_\alpha=I$, corresponding to the nonfaithful vacuum state. Consequently, the Petz recovery map is not well defined.}

\section{A Worked Example: Explicit Construction of a TTR Dilation for a Qutrit EnTO}\label{Worked Example}

This appendix presents the detailed construction of the unitary used in the qutrit example in the main text, together with the corresponding numerical procedure. All assumptions and conditions are the same as those specified in the qutrit example.

\begin{figure*}
\includegraphics[width=16cm]{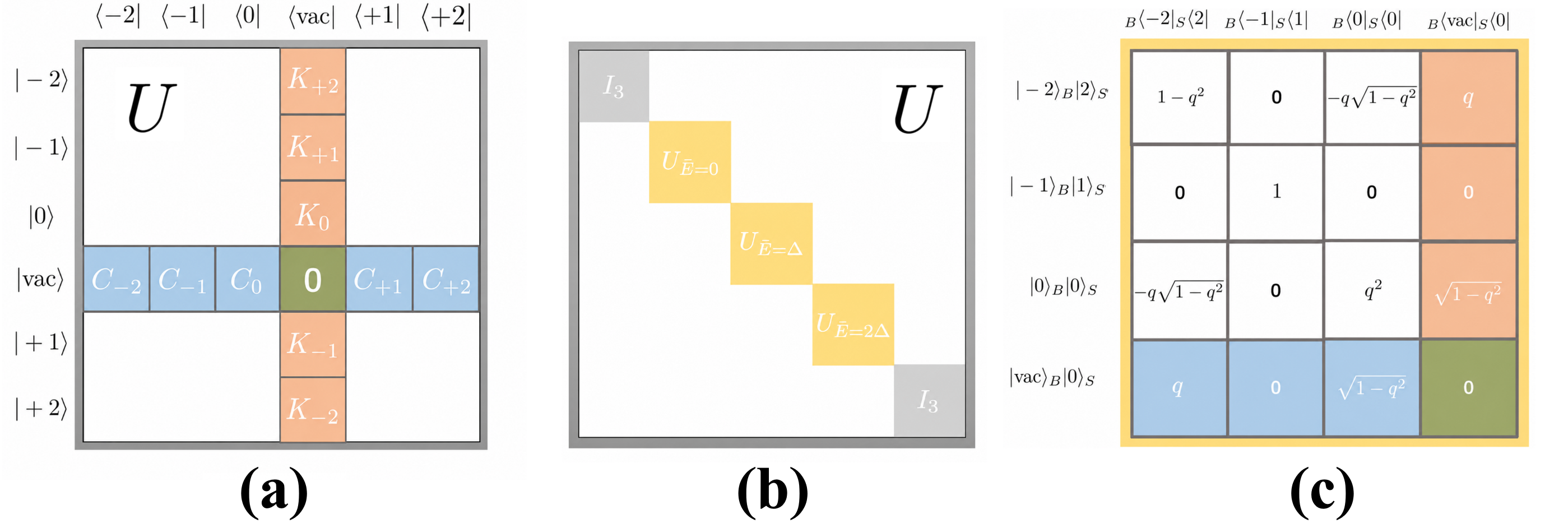}
\caption{\label{TABLETOP/Qutrit_example.png}
\fangzhen{Schematic illustration of the unitary construction in the qutrit example.
(a) Fixed column and row according to the Kraus operators and Petz recovery map.
(b) Block decomposition of the global unitary according to total energy, \(U=\bigoplus_{\bar E} U_{\bar E}\). 
(c) Structure of a constrained energy subblock \(U_{\bar E=0}\).}}
\end{figure*}

\textbf{(a) Fixed column and row according to Kraus operators and Petz recovery map.}
According to Eq.~\eqref{reverse_map_kraus}, the Kraus operators of the Petz recovery map are
\begin{equation}
    C_\omega^\dagger
    =
    (\gamma_S^\kappa)^{1/2}
    K_\omega^\dagger
    (\mathcal E(\gamma_S^\kappa))^{-1/2}=(\gamma_S^\kappa)^{1/2}
    K_\omega^\dagger
    (\gamma_S^\kappa)^{-1/2}.
\end{equation}
Since \(\gamma_S^\kappa\) is diagonal in the energy basis, one obtains
\begin{equation}
    C_{+2}^\dagger=|0\rangle\langle 2|=K_{-2},
    \
    C_0^\dagger=K_0,
    \
    C_{-2}^\dagger=q|2\rangle\langle 0|=K_{+2}.
\end{equation}
Hence,
\begin{equation}
    C_\omega^\dagger=K_{-\omega}.
\end{equation}
Therefore, the Petz map is explicitly
\begin{equation}
    \widehat{\mathcal E}_{\gamma_S^\kappa}(\rho)
    =
    K_{-2}\rho K_{-2}^\dagger
    +
    K_0\rho K_0^\dagger
    +
    K_{+2}\rho K_{+2}^\dagger .
\end{equation}

We now construct the tabletop-reversible covariant dilation. 
Following Eqs.~\eqref{unitary construction column} and~\eqref{unitary construction block}, the forward channel fixes the
\(|{\rm \mathrm{vac}}\rangle\) column of the global unitary as
\begin{equation}
\begin{aligned}
    \langle -2|_B U |{\rm \mathrm{vac}}\rangle_B &= K_{+2},\\
    \langle 0|_B U |{\rm \mathrm{vac}}\rangle_B &= K_{0},\\
    \langle +2|_B U |{\rm \mathrm{vac}}\rangle_B &= K_{-2},\\
    \langle -1|_B U |{\rm \mathrm{vac}}\rangle_B &= K_{+1}=0,\\
    \langle +1|_B U |{\rm \mathrm{vac}}\rangle_B &= K_{-1}=0,\\
    \langle {\rm \mathrm{vac}}|_B U |{\rm \mathrm{vac}}\rangle_B &=0 .
\end{aligned}
\end{equation}

From~\eqref{unitary construction row}, the Petz recovery map fixes the \(\langle{\rm \mathrm{vac}}|\) row of the global unitary as
\begin{equation}
\begin{aligned}
    \langle {\rm \mathrm{vac}}|_B U |+2\rangle_B &= C_{+2}=K_{-2}^\dagger,\\
    \langle {\rm \mathrm{vac}}|_B U |0\rangle_B &= C_{0}=K_{0},\\
    \langle {\rm \mathrm{vac}}|_B U |-2\rangle_B &= C_{-2}=K_{+2}^\dagger,\\
    \langle {\rm \mathrm{vac}}|_B U |+1\rangle_B &= C_{+1}=0,\\
    \langle {\rm \mathrm{vac}}|_B U |-1\rangle_B &= C_{-1}=0,\\
    \langle {\rm \mathrm{vac}}|_B U |{\rm \mathrm{vac}}\rangle_B &=0 .
\end{aligned}
\end{equation}
Thus, the structure of the global unitary is shown in Fig.~~\ref{TABLETOP/Qutrit_example.png}(a). 

\textbf{(b) Block decomposition of the global unitary according to total energy.} After a change of representation, which is merely a reordering of the basis vectors, the prescribed unitary elements are arranged into a block-diagonal form according to the total energy \(\bar E\) as in~\eqref{eq: U block}. 
Then, we get Fig.~\ref{TABLETOP/Qutrit_example.png}(b). 
For total-energy sectors other than \(\bar E=0,\Delta,2\Delta\), the corresponding blocks of \(U\) are unconstrained and may be completed by arbitrary unitary operators. For convenience, we choose them as identity $I$.
The only nontrivial constrained blocks are those with
\(\bar E=0,\Delta,2\Delta\). The corresponding eigenspaces are spanned by
\(\mathcal H_{\bar E=0}
=
{\rm span}\{
|-2\rangle_B|2\rangle_S,
|-1\rangle_B|1\rangle_S,
|0\rangle_B|0\rangle_S,
|{\rm \mathrm{vac}}\rangle_B|0\rangle_S
\}\),
\(\mathcal H_{\bar E=\Delta}
=
{\rm span}\{
|-1\rangle_B|2\rangle_S,
|0\rangle_B|1\rangle_S,
|{\rm \mathrm{vac}}\rangle_B|1\rangle_S,
|+1\rangle_B|0\rangle_S
\}\),
and
\(\mathcal H_{\bar E=2\Delta}
=
{\rm span}\{
|0\rangle_B|2\rangle_S,
|{\rm vac}\rangle_B|2\rangle_S,
|+1\rangle_B|1\rangle_S,
|+2\rangle_B|0\rangle_S
\}\).

Therefore, the partially specified submatrices, with one row and one column already fixed and the remaining entries to be completed, are given by
\begin{align}
U_{\bar E=0}
&=
\begin{pmatrix}
? & ? & ? & q\\
? & ? & ? & 0\\
? & ? & ? & \sqrt{1-q^2}\\
q & 0 & \sqrt{1-q^2} & 0
\end{pmatrix},
\\
U_{\bar E=\Delta}
&=
\begin{pmatrix}
? & ? & 0 & ?\\
? & ? & 1 & ?\\
0 & 1 & 0 & 0\\
? & ? & 0 & ?
\end{pmatrix},
U_{\bar E=2\Delta}
=
\begin{pmatrix}
? & 0 & ? & ?\\
0 & 0 & 0 & 1\\
? & 0 & ? & ?\\
? & 1 & ? & ?
\end{pmatrix}, 
\label{eq:U_blocks_unspecified}
\end{align}
where the symbols ``$?$'' denote unspecified entries.

\textbf{(c) Structure of a constrained energy subblock \(U_{\bar E}\).} We now explicitly complete the three constrained blocks by applying
Eq.~\eqref{eq:explicit_Mij}.
For each of the three blocks, we denote by
$\{|e_j\rangle\}_{j=1}^4$ the standard basis associated with the
corresponding ordering of the total-energy eigenspace. For $\bar E=0$, the prescribed fourth column and fourth row are
characterized by
\begin{equation}
\
|f_0\rangle
=
|g_0\rangle
=
q|e_1\rangle
+
\sqrt{1-q^2}|e_3\rangle.
\end{equation}

We can choose
\begin{equation}
|u_3\rangle
=
|v_3\rangle
=
\sqrt{1-q^2}|e_1\rangle-q|e_3\rangle,
\end{equation}

\begin{equation}
|u_4\rangle
=
|v_4\rangle
=
\ |e_2\rangle,
\end{equation}
which form an orthonormal basis of
$\operatorname{span}
\left\{
|e_4\rangle,\,
q|e_1\rangle+\sqrt{1-q^2}|e_3\rangle
\right\}^{\perp}$.

We choose the unitary acting on this two-dimensional complementary
subspace as
\begin{equation}
U_u^{(0)}=I_2.
\end{equation}
Substituting these choices into
Eqs.~\eqref{eq:partial-M} and \eqref{eq:explicit_Mij}, we obtain
\begin{align}
U_{\bar E=0}
={}&
\left(
q|e_1\rangle+\sqrt{1-q^2}|e_3\rangle
\right)\langle e_4|
\nonumber\\
&+
|e_4\rangle
\left(
q\langle e_1|
+\sqrt{1-q^2}\langle e_3|
\right)
\nonumber\\
&+
\left(
\sqrt{1-q^2}|e_1\rangle-q|e_3\rangle
\right)
\left(
\sqrt{1-q^2}\langle e_1|-q\langle e_3|
\right)
\nonumber\\
&+
|e_2\rangle\langle e_2|
\nonumber\\
={}&
\begin{pmatrix}
1-q^2 & 0 & -q\sqrt{1-q^2} & q\\
0 & 1 & 0 & 0\\
-q\sqrt{1-q^2} & 0 & q^2 & \sqrt{1-q^2}\\
q & 0 & \sqrt{1-q^2} & 0
\end{pmatrix}.
\label{eq:U_E0_completion}
\end{align}
With the same method, we can also complete $U_{\bar E=\Delta}$ and $U_{\bar E=2\Delta}$  as in Eq.~\eqref{qutrit_example_result}.

\section{Non-Gaussian Tabletop Realization of a Single-Mode Gaussian Amplifier}
\label{app:asymptotic_nonGaussian_TTR_amp}

In this appendix, we construct a non-Gaussian dilation of a single-mode Gaussian amplification channel, which is both covariant and tabletop time reversible. It indicates that Gaussianity imposes highly restrictive structural constraints on dilations and recovery processes. 

In Theorem~\ref{theorem_2}, we have shown that the single-mode Gaussian amplification channel \(\mathcal{E}_{A}\) admits no Gaussian dilation whose tabletop reverse channel coincides with the corresponding Petz recovery map. This result naturally raises the question of whether the obstruction comes from the continuous-variable nature of the system itself or instead from the restriction to Gaussian dilations. In other words, it remains possible that a non-Gaussian dilation of the same amplification channel may realize TTR.

The motivation for considering this possibility can be traced back to the key contradiction in the proof of Theorem~\ref{theorem_2}. For the amplification channel \(\mathcal{E}_{A}\), Eq.~\eqref{detXP} gives $\det X_\mathrm{P}<1$, which means that the Petz recovery map acts as a loss channel. By contrast, as shown in the proof of Theorem~\ref{theorem_2}, the system block of the tabletop reverse channel is still $X_\mathrm{rev}
=
\sqrt{\tau}\,I,
\
\det X_\mathrm{rev}=\tau>1$. Thus, the tabletop reverse channel generated by a Gaussian dilation is
still an amplification channel, whereas the Petz recovery map is
a loss channel. This mismatch is the essential obstruction to Gaussian TTR for the amplification channel.

Physically, this obstruction can be understood as follows. A Gaussian amplification dilation is implemented by a Gaussian unitary, such as a two-mode squeezing interaction, whose operation is supported by an external energy supply. Reversing such a Gaussian unitary changes the direction of the squeezing transformation, but the induced tabletop reverse channel on the system still has an amplifying linear part. Therefore, the reverse process available within the same Gaussian device architecture is not an energy-reducing channel, and hence cannot realize the Petz recovery map.

This observation suggests a natural way to bypass the Gaussian obstruction. If one allows a non-Gaussian dilation, it may be possible to design the same tabletop reverse architecture so that the reversed unitary induces an effective energy-decreasing channel on the system. In that case, the tabletop reverse channel can become a loss channel and may therefore coincide with the Petz recovery map of the amplification channel.

In the following, we show that this idea can indeed be realized. We construct a three-mode non-Gaussian dilation whose forward channel converges to the single-mode Gaussian amplification channel, while its tabletop reverse channel converges to the Petz recovery map of that amplification channel. Hence, although exact Gaussian TTR is forbidden by Theorem~\ref{theorem_2}, the single-mode Gaussian amplification channel admits a non-Gaussian tabletop time-reversible dilation.

The construction involves a single-mode system and a two-mode
environment. The same three-mode unitary is used in the forward and backward processes, while the environmental state is allowed to be changed in the tabletop reverse. The essential point is that in different limits the same trilinear interaction reduces either to a two-mode squeezing interaction or to a beam-splitter interaction.

Let \(a\) denote the annihilation operator of the system mode and let \(b,c\) denote the annihilation operators of two environmental modes. We consider the three-mode Hamiltonian
\begin{equation}\label{eq:trilinear_H_plus}
H_3
=
\hbar g_3
\left(
c^\dagger a b
+
c a^\dagger b^\dagger
\right),
\end{equation}
where \(g_3\in\mathbb R\) is the three-mode coupling strength. The corresponding unitary is
\begin{equation}\label{eq:trilinear_U_plus}
U_\varepsilon
=
\exp\left[
-i\varepsilon
\left(
c^\dagger a b
+
c a^\dagger b^\dagger
\right)
\right],
\qquad
\varepsilon:=g_3 t .
\end{equation}
Since the generator is cubic in the field operators, \(U_\varepsilon\) is a non-Gaussian unitary.

The target forward channel is the phase-insensitive single-mode Gaussian amplification channel as Eq.~\eqref{Gaussian_amplification}.

\textbf{Forward channel.}
This subsection shows that the Gaussian amplification channel can be realized by the non-Gaussian dilation defined in Eq.~\eqref{eq:trilinear_U_plus}.
To ensure that the channel in this example remains phase covariant, we choose the environmental state of mode $c$ as follows:
\begin{equation} \label{C_state}
\omega_{\mathrm{F},c}^{(\varepsilon)}
:=
\int_0^{2\pi}
\frac{d\phi}{2\pi}
\left|
\frac{r_\mathrm{F}}{\varepsilon}e^{i\phi}
\right\rangle_c
\left\langle
\frac{r_\mathrm{F}}{\varepsilon}e^{i\phi}
\right|,
\
r_\mathrm{F}:=\operatorname{arcosh}\sqrt{\tau}.
\end{equation}
Using the Fock expansion of coherent states, this state can be written as
\begin{equation}
\omega_{\mathrm{F},c}^{(\varepsilon)}
=
\exp\left[-\frac{r_\mathrm{F}^2}{\varepsilon^2}\right]
\sum_{n=0}^{\infty}
\frac{
\left(r_\mathrm{F}^2/\varepsilon^2\right)^n
}{n!}
|n\rangle_c\langle n|.
\end{equation}
For each mode label \(\ell\in\{a,b,c\}\), we denote the photon-number operator by
\(N_\ell:=\ell^\dagger \ell\) and the free Hamiltonian by
\(H_\ell:=\hbar\omega_\ell N_\ell\).
The photon-number operator of mode \(c\) is
\(N_c:=c^\dagger c\). In the Fock basis, it takes the form
\(
N_c=\sum_{n=0}^{\infty}n\,|n\rangle_c\langle n|\). Therefore,
$[\omega_{\mathrm{F},c}^{(\varepsilon)},N_c]=0$. We choose the forward environmental state as
\begin{equation}\label{duiyi1}
\eta_\mathrm{F}^{(\varepsilon)}
=
\rho_\mathrm{th}^{(b)}(\bar n)
\otimes
\omega_{\mathrm{F},c}^{(\varepsilon)}.
\end{equation}
Since the thermal state is also diagonal in the number basis, one has
\begin{equation}\label{duiyi2}
[\eta_\mathrm{F}^{(\varepsilon)},H_b+H_c]=0.
\end{equation}
Moreover, under the resonance condition \(\omega_c=\omega_a+\omega_b\),
the Hamiltonian
$H_3
=
\hbar g_3
\left(
c^\dagger ab+ca^\dagger b^\dagger
\right)$
satisfies $[H_3,H_a+H_b+H_c]=0$. Therefore, 
\begin{equation} \label{duiyi3}
    [U_\varepsilon,H_a+H_b+H_c]=0.
\end{equation}

According to Proposition~2 of Ref.~\cite{marvian2016quantify}, the conditions
\([\eta_F^{(\varepsilon)},H_b+H_c]=0\) and
\([U_\varepsilon,H_a+H_b+H_c]=0\) ensure that
the forward channel
\begin{equation} \label{Forward_channel_qutrit}
\mathcal{E}_\mathrm{F}^{(\varepsilon)}(\rho_a)
=
\operatorname{Tr}_{bc}
\left[
U_\varepsilon
\left(
\rho_a\otimes\eta_\mathrm{F}^{(\varepsilon)}
\right)
U_\varepsilon^\dagger
\right]
\end{equation}
is phase covariant.

Because the environmental state is a phase average of coherent states, the
forward channel can be written as
\begin{equation}
\mathcal{E}_\mathrm{F}^{(\varepsilon)}
=
\int_0^{2\pi}
\frac{d\phi}{2\pi}
\mathcal{E}_{\mathrm{F},\phi}^{(\varepsilon)} .
\end{equation}
For each fixed \(\phi\), let
\begin{equation}
\zeta_{\mathrm{F},\phi}
=
\frac{r_\mathrm{F}}{\varepsilon}e^{i\phi}.
\end{equation}
Moving to the displaced \(c\)-mode frame gives
\begin{equation}
\begin{aligned}
D_c^\dagger(\zeta_{\mathrm{F},\phi})cD_c(\zeta_{\mathrm{F},\phi})
&=
c+\zeta_{\mathrm{F},\phi},
\\
D_c^\dagger(\zeta_{\mathrm{F},\phi})c^\dagger D_c(\zeta_{\mathrm{F},\phi})
&=
c^\dagger+\zeta_{\mathrm{F},\phi}^* .
\end{aligned}
\end{equation}

Indeed, using the expansion of the exponential, one has
\begin{equation}
\begin{aligned}
D^\dagger e^A D
&=
D^\dagger
\left(
\sum_{n=0}^{\infty}\frac{A^n}{n!}
\right)
D =
\sum_{n=0}^{\infty}
\frac{D^\dagger A^nD}{n!}.
\end{aligned}
\end{equation}
Since \(D^\dagger D=DD^\dagger=I\), it follows that
$D^\dagger A^nD
=D^\dagger (ADD^\dagger)^nD=
(D^\dagger A D)^n$.
Therefore
\begin{equation}
D^\dagger e^A D
=
\sum_{n=0}^{\infty}
\frac{(D^\dagger A D)^n}{n!}
=
e^{D^\dagger A D}.
\end{equation}
Therefore the exponent of \(U_\varepsilon\) becomes
\begin{align}
&-i\varepsilon
\left[
(c^\dagger+\zeta_{\mathrm{F},\phi}^*)ab
+
(c+\zeta_{\mathrm{F},\phi})a^\dagger b^\dagger
\right]
\nonumber\\
&\quad =
-i r_\mathrm{F}
\left(
e^{-i\phi}ab+e^{i\phi}a^\dagger b^\dagger
\right)
-i\varepsilon
\left(
c^\dagger ab+ca^\dagger b^\dagger
\right)
.
\end{align}
Thus, in the limit
\begin{equation}
\varepsilon\to0,
\qquad
\left|\frac{r_\mathrm{F}}{\varepsilon}\right|\to\infty,
\qquad
r_\mathrm{F}\ \text{fixed},
\end{equation}
the term $-i\varepsilon
\left(
c^\dagger ab+ca^\dagger b^\dagger
\right)$ vanishes
and  with \(\vartheta=\phi-\pi/2\), 
the limiting unitary is a phase-dependent two-mode squeezing unitary
\begin{equation}
U_{\mathrm{TMS},\phi}^{ab}(r_\mathrm{F})
=
\exp\left[
r_\mathrm{F}
\left(
e^{i\vartheta}a^\dagger b^\dagger
-
e^{-i\vartheta}ab
\right)
\right].
\end{equation}
It satisfies
\begin{equation}\label{D17}
\left(U_{\mathrm{TMS},\phi}^{ab}\right)^\dagger
a
U_{\mathrm{TMS},\phi}^{ab}
=
\cosh r_\mathrm{F}\,a
+
e^{i\vartheta}\sinh r_\mathrm{F}\,b^\dagger .
\end{equation}
The corresponding quadrature transformation is
\begin{equation}
\hat{\bm r}_{a,\mathrm{out}}
=
\cosh r_\mathrm{F}\,\hat{\bm r}_{a}
+
\sinh r_\mathrm{F}\,M(\vartheta)\hat{\bm r}_{b},
\end{equation}
where
\begin{equation}
M(\vartheta)=
\begin{pmatrix}
\cos\vartheta & \sin\vartheta\\
\sin\vartheta & -\cos\vartheta
\end{pmatrix}.
\end{equation}
Therefore, for an input covariance matrix \(V\) of the \(a\) mode and an
environmental covariance matrix \(V_b\) of the \(b\) mode, one has
\begin{equation} \label{D20}
V
\longmapsto
\cosh^2 r_\mathrm{F}\,V
+
\sinh^2 r_\mathrm{F}\,M(\vartheta)V_bM(\vartheta)^T .
\end{equation}
Since the \(b\)-mode environment is thermal, $V_b=(2\bar n+1)I$.
Moreover, \(M(\vartheta)M(\vartheta)^T=I\). Hence
$M(\vartheta)V_bM(\vartheta)^T
=
(2\bar n+1)I$.
Thus the phase factor does not affect the induced channel, and we finally obtain
\begin{equation}
V
\longmapsto
\cosh^2 r_\mathrm{F}\,V
+
\sinh^2 r_\mathrm{F}(2\bar n+1)I .
\end{equation}
According to~\eqref{C_state}, $\cosh^2 r_\mathrm{F}=\tau,
\
\sinh^2 r_\mathrm{F}=\tau-1$,
we obtain
\begin{equation} \label{eq:forward_amp_limit}
\mathcal{E}_\mathrm{F}^{(\varepsilon)}
\longrightarrow
\mathcal{E}_{A},
\qquad
\varepsilon\to0 .
\end{equation}
Therefore, under the classical-pump limit, the forward channel~\eqref{Forward_channel_qutrit} converges to the single-mode Gaussian amplification channel.

\textbf{Petz recovery map of the amplification channel.}
We now identify the Petz recovery map of the limiting amplification
channel~\eqref{Gaussian_amplification} by using the Gaussian Petz formula in
Eqs.~\eqref{eq:gaussian_petz_recovery}--\eqref{deltaP}.
Take the reference state to be a full-rank thermal state
$\alpha=\rho_{\mathrm{th}}(\bar n_\alpha),
\
V_\alpha=(2\bar n_\alpha+1)I$.
Hence the output reference state $\alpha'=\mathcal{E}_{A}(\alpha)$ is also thermal, with covariance
$V_{\alpha'}=(2\bar n_{\alpha'}+1)I$,
where
$2\bar n_{\alpha'}+1
=
\tau(2\bar n_\alpha+1)
+
(\tau-1)(2\bar n+1)$.
From~\eqref{XP}, we obtain
\begin{equation}\label{eq:XP_amp_appendix}
X_\mathrm{P}
=
\sqrt{\tau}
\frac{
\sqrt{(2\bar n_\alpha+1)^2-1}
}{
\sqrt{(2\bar n_{\alpha'}+1)^2-1}
}I
:=
\sqrt{\eta_\mathrm{P}}\,I,
\end{equation}
which has $0<\eta_\mathrm{P}<1$. Therefore, the Petz recovery map has the linear part of an attenuator.
Moreover, Eq.~\eqref{YP} gives
\begin{equation}\label{eq:YP_amp_appendix}
\begin{aligned}
Y_\mathrm{P}
&=
\left[
(2\bar n_\alpha+1)
-
\eta_\mathrm{P}(2\bar n_{\alpha'}+1)
\right]I .
\end{aligned}
\end{equation}
Since \(\vec d_\alpha=\vec d_{\alpha'}=0\), Eq.~\eqref{deltaP} gives
\begin{equation}
\vec\delta_\mathrm{P}=0 .
\end{equation}
Therefore, the Petz recovery map of the amplification channel with respect
to the thermal reference state \(\alpha\) is
\begin{equation}\label{eq:petz_amp_attenuator}
\widehat{\mathcal{E}}_{A,\alpha}:
\
V
\longmapsto
\eta_\mathrm{P}V+Y_\mathrm{P},
\
\vec d\longmapsto \sqrt{\eta_\mathrm{P}}\,\vec d .
\end{equation}
This is a Gaussian attenuator with transmissivity $\eta_{\mathrm P}$ and noise matrix $Y_{\mathrm P}$.

\textbf{Reverse channel.}
In the tabletop reverse channel, the same unitary is used in reverse, namely
\(U_\varepsilon^\dagger\). We choose the reverse environmental state as
\begin{equation}
\eta_\mathrm{B}^{(\varepsilon)}
=
\omega_{\mathrm{R},b}^{(\varepsilon)}
\otimes
\rho_\mathrm{th}^{(c)}(\bar n_\mathrm{R}),
\end{equation}
where
\begin{equation} \label{D28}
\omega_{\mathrm{R},b}^{(\varepsilon)}
:=
\int_0^{2\pi}
\frac{d\phi}{2\pi}
\left|
\frac{\theta}{\varepsilon}e^{i\phi}
\right\rangle_b
\left\langle
\frac{\theta}{\varepsilon}e^{i\phi}
\right|,
\
\theta:=\arccos\sqrt{\eta_\mathrm{P}} .
\end{equation}
Equivalently,
\begin{equation}
\omega_{\mathrm{R},b}^{(\varepsilon)}
=
\exp\left[-\frac{\theta^2}{\varepsilon^2}\right]
\sum_{n=0}^{\infty}
\frac{
\left(\theta^2/\varepsilon^2\right)^n
}{n!}
|n\rangle_b\langle n|.
\end{equation}

From Eqs.~\eqref{duiyi2} and \eqref{duiyi3}, we obtain
$[\eta_\mathrm{B}^{(\varepsilon)},H_b+H_c]=0$,
and
$[U_\varepsilon^\dagger,H_a+H_b+H_c]=0$.
Consequently, the reverse channel
\begin{equation}
\mathcal{E}_\mathrm{B}^{(\varepsilon)}(\rho_a')
=
\operatorname{Tr}_{bc}
\left[
U_\varepsilon^\dagger
\left(
\rho_a'\otimes\eta_\mathrm{B}^{(\varepsilon)}
\right)
U_\varepsilon
\right]
\end{equation}
is phase covariant.

For each fixed phase \(\phi\), moving to the displaced \(b\)-mode frame gives
\begin{align}
&i\varepsilon
\left[
c^\dagger a(b+\frac{\theta}{\varepsilon}e^{i\phi})
+
ca^\dagger(b^\dagger+\frac{\theta}{\varepsilon}e^{-i\phi})
\right]
\nonumber\\
&\quad =
i\varepsilon
\left(
c^\dagger ab+ca^\dagger b^\dagger
\right)
+
i\theta
\left(
e^{i\phi}c^\dagger a
+
e^{-i\phi}ca^\dagger
\right).
\end{align}
With \(\chi=\phi+\pi/2\), the leading term becomes
\begin{equation}
\theta
\left(
e^{i\chi}ac^\dagger
-
e^{-i\chi}a^\dagger c
\right).
\end{equation}
Therefore, in the limit
\begin{equation}
\varepsilon\to0,
\qquad
\left|\frac{\theta}{\varepsilon}\right|\to\infty,
\qquad
\theta\ \text{fixed},
\end{equation}
and with \(\chi=\phi+\pi/2\), the limiting unitary is a phase-dependent beam-splitter unitary
\begin{equation}
U_{\mathrm{BS},\phi}^{ac}(\theta)
=
\exp\left[
\theta
\left(
e^{i\chi}ac^\dagger
-
e^{-i\chi}a^\dagger c
\right)
\right].
\end{equation}
It satisfies
\begin{equation}
\left(U_{\mathrm{BS},\phi}^{ac}\right)^\dagger
a
U_{\mathrm{BS},\phi}^{ac}
=
\cos\theta\,a
-
e^{-i\chi}\sin\theta\,c.
\end{equation}
Following the same procedure as in Eqs.~\eqref{D17}--\eqref{D20}, since the \(c\) mode is thermal, the phase factor does not affect the induced
channel on the \(a\) mode. Substituting \(\theta=\arccos\sqrt{\eta_{\mathrm{P}}}\) according to Eq.~\eqref{D28}, we obtain
\begin{equation}
V
\longmapsto
\eta_\mathrm{P}V
+
(1-\eta_\mathrm{P})(2\bar n_\mathrm{R}+1)I.
\end{equation}
To make the limiting reverse channel coincide
with the Petz recovery map in Eq.~\eqref{eq:petz_amp_attenuator}, we choose
the thermal photon number \(\bar n_\mathrm{R}\) of the \(c\)-mode environment
such that
\begin{equation}
Y_\mathrm{P}=(1-\eta_\mathrm{P})(2\bar n_\mathrm{R}+1)I.
\end{equation}
Equivalently,
\begin{equation}
2\bar n_\mathrm{R}+1
=
\frac{
(2\bar n_\alpha+1)
-
\eta_\mathrm{P}(2\bar n_{\alpha'}+1)
}{
1-\eta_\mathrm{P}
}.
\end{equation}

One can prove in the above equation $2\bar n_\mathrm{R}+1\ge1$, so this definition is well defined and gives a physical thermal photon number.  

Therefore, we obtain
\begin{equation}\label{eq:backward_Petz_limit}
\mathcal{E}_\mathrm{B}^{(\varepsilon)}
\longrightarrow
\widehat{\mathcal{E}}_{A,\alpha},
\qquad
\varepsilon\to0 .
\end{equation}
Combining Eqs.~\eqref{eq:forward_amp_limit} and~\eqref{eq:backward_Petz_limit}, the same three-mode non-Gaussian unitary architecture realizes
$\mathcal{E}_\mathrm{F}^{(\varepsilon)}
\longrightarrow
\mathcal{E}_{A},
\
\mathcal{E}_\mathrm{B}^{(\varepsilon)}
\longrightarrow
\widehat{\mathcal{E}}_{A,\alpha}$.
Hence, the single-mode Gaussian amplification channel admits a non-Gaussian tabletop time-reversible dilation.

\textbf{Further discussion.}
It would be interesting future work to investigate whether TTR
conditions can be established for multimode Gaussian channels, and whether
every continuous-variable channel admits non-Gaussian dilations.
The construction above suggests a possible intuition in this direction. 
In continuous-variable systems, a general channel may involve
infinitely many Kraus operators, and each Kraus operator acts on an
infinite-dimensional Hilbert space~\cite{ivan2011operator}. As a result, the finite-dimensional
unitary-completion argument used in Lemma~\ref{lemma1} can no longer be applied
directly. In particular, such a construction would require prescribing infinitely many
rows and columns of an infinite-dimensional operator and then completing them
to a unitary, which constitutes a substantial constructive obstruction. We thus leave this as an open question.



%

\end{document}